\documentclass[12pt]{article}
\usepackage[left=35mm,right=35mm,top=35mm,bottom=35mm]{geometry}
\usepackage{amsmath}
\usepackage{amsthm}
\usepackage{amssymb}
\usepackage{amsfonts}
\usepackage{mathrsfs}
\usepackage{color}

\DeclareMathOperator*{\slim}{s-lim}

\DeclareMathOperator*{\arcsinh}{arcsinh}

\makeatletter

\@addtoreset{equation}{section}
\makeatother
\newtheorem{The}{Theorem}[section]
\newtheorem{Ass}[The]{Assumption}

\newtheorem{Lem}[The]{Lemma}

\title{Inverse scattering for repulsive potential and strong singular interactions}
\author{Atsuhide ISHIDA\\
\\
\normalsize E-mail: aishida@rs.tus.ac.jp\\
Department of Liberal Arts, Faculty of Engineering,
 \\Tokyo University of Science\\
\normalsize 6-3-1 Niijuku, Katsushika-ku,Tokyo 125-8585, Japan\\
Alfr\'ed R\'enyi Institute of Mathematics, Re\'altanoda utca 13-15, 1053 Budapest, Hungary
}
\date{}

\begin{document}
\begin{flushleft}
{\Large \bf Inverse scattering for repulsive potential and strong singular interactions}
\end{flushleft}

\begin{flushleft}
{\large Atsuhide ISHIDA}\\
{Katsushika Division, Institute of Arts and Sciences, Tokyo University of Science, 6-3-1 Niijuku, Katsushika-ku, Tokyo 125-8585, Japan\\ 
Alfr\'ed R\'enyi Institute of Mathematics, Re\'altanoda utca 13-15, Budapest 1053, Hungary\\
Email: aishida@rs.tus.ac.jp\\
}
\end{flushleft}

\begin{abstract}
\noindent
In a previous work of 2014 on a quantum system governed by the repulsive Hamiltonian, the author proved uniqueness for short-range interactions described by a scattering operator consisting of regular and singular parts. In this paper, the singular part is assumed to have much stronger singularities and the same uniqueness theorem is proved. By applying the time-dependent method invented by Enss and Weder in 1995, the high-velocity limit for a wider class of the scattering operator with stronger singularities also uniquely determines the interactions of a multi-dimensional system. 
\end{abstract}

\quad\textit{Keywords}: Scattering theory, Wave operator, Scattering operator\par
\quad\textit{MSC}2020: 35R30, 81Q10, 81U05, 81U40

\section{Introduction\label{introduction}}

We investigate multidimensional inverse scattering for the Schr\"odinger operator that has a repulsive term. Throughout this paper, we assume space has dimensions $n\geqslant2$. The free dynamics is described by the free Hamiltonian
\begin{equation}
H_0=p^2-x^2,\label{free}
\end{equation}
called the repulsive Hamiltonian, acting on $L^2(\mathbb{R}^n)$, where $p=-{\rm i}(\partial_{x_1},\ldots,\partial_{x_n})=-{\rm i}\nabla$ is the momentum operator, $x=(x_1,\ldots,x_n)\in\mathbb{R}^n$ is the position of the particle, and $x^2$ means $|x|^2$. Thus, $p^2=|p|^2=-\sum_{j=1}^n\partial_{x_j}^2=-\Delta$ is the negative of the Laplacian. If the repulsive term $-x^2$ is replaced by $+x^2$, $H_0$ becomes the well-known harmonic oscillator Hamiltonian and, in this case, the particle system retains some of the bound states but does not have scattering states.\par

The interactional potential $V=V(x)$ is a multiplicative operator and decomposes as $V=V^{\rm sing}+V^{\rm reg}\in\mathscr{V}^{\rm sing}+\mathscr{V}^{\rm reg}$, which satisfies the following Assumption \ref{ass}.

\begin{Ass}\label{ass}
The real-valued function $V^{\rm sing}\in\mathscr{V}^{\rm sing}$ is compactly supported in $\mathbb{R}^n$ and $V^{\rm sing}\in L^q(\mathbb{R}^n)$ where
\begin{equation}
q
\begin{cases}
\ =2 &\mbox{\rm if}\quad n\leqslant3,\\
\ >2 &\mbox{\rm if}\quad n=4,\\
\ =n/2 &\mbox{\rm if}\quad n\geqslant5.
\end{cases}\label{q}
\end{equation}
The real-valued function $V^{\rm reg}\in\mathscr{V}^{\rm reg}$ belongs to $C^1(\mathbb{R}^n)$ and satisfies
\begin{equation}
|\partial_x^\beta V^{\rm reg}(x)|\lesssim_\beta\langle x\rangle^{-\epsilon-|\beta|}\label{reg}
\end{equation}
for $\beta\in(\mathbb{N}\cup\{0\})^n$ and $|\beta|\leqslant1$, where $\epsilon>0$, $\langle x\rangle=\sqrt{1+x^2}$ and $A\lesssim_\beta B$ means that there exists a constant $C_\beta>0$ such that $A\leqslant C_\beta B$. If the constant has no specific dependence, we write $A\lesssim B$.
\end{Ass}

If $q=2$ for $n\leqslant3$ and $q>n/2$ for $n\geqslant4$, then $V^{\rm sing}$ is well-known to be relatively $p^2$-compact (see \cite[Theorem 8.19]{Sc} for example); moreover the wave operators
\begin{equation}
W^\pm=\slim_{t\rightarrow\pm\infty}e^{{\rm i}tH}e^{-{\rm i}tH_0}\label{wave_op}
\end{equation}
exist and are asymptotically complete by the result given in \cite{BoCaHaMi} (see also \cite{Ko}), where $H=H_0+V$. If $q=n/2$ for $n\geqslant5$, it is known that $V^{\rm sing}$ is relatively $p^2$-bounded infinitesimally (see \cite[Theorem X. 21]{ReSi2}). Although its relatively $p^2$-compactness also holds by applying Hardy-Littlewood-Sobolev inequality (\cite[VIII 4.2]{St2}) even in this instance, we instead give the direct proof of the existence of \eqref{wave_op} in section \ref{high_dim_case}. We thus define the scattering operator such that
\begin{equation}
S(V)=(W^+)^*W^-.
\end{equation}
The main theorem in this paper is the following statement.

\begin{The}\label{the1}
Let $V_1,V_2\in \mathscr{V}^{\rm sing}+\mathscr{V}^{\rm reg}$. If $S(V_1)=S(V_2)$, then $V_1=V_2$.
\end{The}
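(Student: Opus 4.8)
The plan is to use the Enss–Weder time-dependent method, adapted to the repulsive Hamiltonian $H_0=p^2-x^2$. The starting point is the identity, for suitable high-velocity states $\phi_v=e^{{\rm i}v\cdot x}\phi$ with $\phi\in\mathscr{S}(\mathbb{R}^n)$ and $v\in\mathbb{R}^n$ with $|v|$ large, that $S(V)-\mathbf 1$ can be represented through the Dollard-type Duhamel formula
\begin{equation}
{\rm i}(S(V)-\mathbf 1)=\int_{-\infty}^\infty e^{{\rm i}tH_0}V e^{-{\rm i}tH_0}\,dt + (\text{lower order}),
\end{equation}
where the correction terms come from replacing the full propagator $e^{-{\rm i}tH}$ by the free one and are controlled by the existence of the wave operators together with the usual Cook-type estimates. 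The first step is therefore to set up the free evolution $e^{-{\rm i}tH_0}$ explicitly: for the repulsive Hamiltonian the Mehler-type formula gives $e^{-{\rm i}tH_0}$ as an explicit oscillatory integral in which the classical free flight is $x(t)\sim (\cosh 2t)\,x + (\sinh 2t)\,p$, so a wave packet with momentum boost $v$ travels essentially ballistically with an exponentially growing speed. The key point is that this only makes the high-velocity estimates \emph{better}, since the spreading of the packet and the separation of the interaction region happen faster than in the free-Laplacian case.

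The second step is the extraction of the leading term. One scales $v\to\infty$ and shows
\begin{equation}
\lim_{|v|\to\infty}|v|\,\bigl(\phi_v,{\rm i}(S(V)-\mathbf 1)\psi_v\bigr)=\int_{-\infty}^\infty\bigl(\phi,V(x+t\hat v\,|v|/|v|)\psi\bigr)\,dt
\end{equation}
in the appropriate rescaled variables; more precisely, after the change of variables matched to the classical trajectory, the limit is the integral of $V$ along the line in direction $\hat v=v/|v|$, i.e. the X-ray (Radon line) transform $\int_{\mathbb R}V(y+s\hat v)\,ds$ tested against $|\phi|^2$. The regular part $V^{\rm reg}$ is handled directly because of the decay \eqref{reg}; the singular part $V^{\rm sing}$, being compactly supported and only in $L^q$, requires the mapping properties of $e^{-{\rm i}tH_0}$ between $L^p$ spaces (dispersive/Strichartz-type estimates for the repulsive propagator, which hold with exponentially good constants because of the $\cosh 2t$ factor) to make sense of $V^{\rm sing}e^{-{\rm i}tH_0}\phi_v$ and to dominate the $t$-integral uniformly in $v$; this is exactly where Assumption \ref{ass} on $q$ is used, matching the Hölder/Sobolev exponents so that $V^{\rm sing}$ is $e^{-{\rm i}tH_0}$-smoothing enough near $t=0$.

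The third step is inversion: knowing $\int_{\mathbb R}V(y+s\hat v)\,ds$ for all directions $\hat v$ and (after translating $\phi$) all base points determines the X-ray transform of $V$, and the X-ray transform is injective on the class $\mathscr V^{\rm sing}+\mathscr V^{\rm reg}$ (compact support plus polynomial decay guarantee integrability along every line and enough decay to invert), so $V_1=V_2$ follows from $S(V_1)=S(V_2)$. I expect the main obstacle to be the second step for the singular part: one must justify the interchange of the $v\to\infty$ limit with the $t$-integral near the singularity $t=0$, where the classical trajectories have not yet separated the packet from $\supp V^{\rm sing}$, and this is precisely where the $L^q$ condition \eqref{q} and the dispersive estimates for $e^{-{\rm i}tH_0}$ must be combined carefully; the exponential factors in the repulsive propagator help for $|t|$ large but are irrelevant (indeed slightly harmful through the Jacobian) near $t=0$, so the delicate bound is the small-$t$ one, uniform in $v$.
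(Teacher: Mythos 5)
Your overall framework (Enss--Weder time-dependent method, Mehler formula, exponential classical flight, Radon/X-ray inversion at the end) is the right one, but the central reconstruction formula you propose is wrong for the class of potentials in Assumption \ref{ass}, and this is a genuine gap rather than a technicality. You take as leading object $\lim_{|v|\to\infty}|v|\,(\phi_v,{\rm i}(S-\mathbf 1)\psi_v)=\int_{\mathbb R}(\phi,V(x+t\hat v)\psi)\,{\rm d}t$. Assumption \ref{ass} only requires $|V^{\rm reg}(x)|\lesssim\langle x\rangle^{-\epsilon}$ with $\epsilon>0$ arbitrarily small, so the X-ray transform $\int_{\mathbb R}V^{\rm reg}(y+s\hat v)\,{\rm d}s$ on your right-hand side diverges whenever $\epsilon\leqslant1$; the limit you claim does not exist, and no amount of care with the singular part repairs this. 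This is precisely why the paper's reconstruction statement (Theorem \ref{the2}) is formulated for the commutator ${\rm i}[S,p_j]$ instead of $S-\mathbf 1$: the commutator effectively differentiates the potential, so the regular part enters only through $\partial_{x_j}V^{\rm reg}$, which decays like $\langle x\rangle^{-1-\epsilon}$ and is integrable along lines, while for the non-differentiable singular part the derivative is moved onto the test functions, giving the combination $(V^{\rm sing}(x+\hat vt)p_j\Phi_0,\Psi_0)-(V^{\rm sing}(x+\hat vt)\Phi_0,p_j\Psi_0)$. Relatedly, your ``lower order'' Duhamel corrections cannot be controlled by plain Cook estimates: the slow decay of $V^{\rm reg}$ forces the introduction of Graf-type modified wave operators with the scalar modifier $e^{-{\rm i}\int_0^tV^{\rm reg}(\sinh(2\tau)v)\,{\rm d}\tau}$, which absorbs an otherwise uncontrolled phase. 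Uniqueness then follows from the Plancherel formula for the Radon transform applied to the derivative data, which still determines $V_1-V_2$ since the potentials vanish at infinity.

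A secondary misdiagnosis: you identify the small-$t$ regime near $\supp V^{\rm sing}$ as the delicate step. In fact the contribution of $|t|<1$ is handled exactly as in the earlier work \cite{Is1} and requires nothing new; the genuinely new difficulties at the borderline exponents of \eqref{q} are (i) for $n\leqslant3$, $q=2$, the large-$|t|$ estimate of $\|V^{\rm sing}(x)e^{-{\rm i}tH_0}\Phi_v\|$, where the H\"ormander--Mikhlin multiplier theorem used in \cite{Is1} is no longer available and the paper substitutes a scale-invariant Carlson--Beurling inequality (Lemma \ref{lem6}) to bound $\|\langle p/\sinh(2t)\rangle^{-2}\|_{\mathscr B(L^\infty)}$ uniformly; and (ii) for $n\geqslant5$, $q=n/2$, the existence of the wave operators themselves, which the paper proves directly via H\"older and Hausdorff--Young through the Mehler formula. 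Your appeal to generic ``dispersive/Strichartz-type estimates'' does not engage with either of these endpoint issues.
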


The previous work \cite{Is1} proved this theorem assuming $V^{\rm sing}$ is compactly supported and $V^{\rm sing}\in L^{\tilde{q}}(\mathbb{R}^n)$ where
\begin{equation}
\tilde{q}
\begin{cases}
\ >2 &\mbox{\rm if}\quad n\leqslant4,\\
\ >n/2 &\mbox{\rm if}\quad n\geqslant5.
\end{cases}
\end{equation}
We note that if $V^{\rm sing}\in L^{\tilde{q}}(\mathbb{R}^n)$ then $V^{\rm sing}\in L^q(\mathbb{R}^n)$ by the H\"older inequality because $V^{\rm sing}$ is compactly supported. Therefore, Theorem \ref{the1} extends the result given in \cite{Is1} in which $q=2$ for $n\leqslant3$ and $q=n/2$ for $n\geqslant5$. If $n=2$ or $3$, the function
\begin{equation}
c|x|^{-n/2+\epsilon}F(|x|\leqslant1)\label{coulomb}
\end{equation}
belongs to $\mathscr{V}^{\rm sing}$ for any $\epsilon>0$ and $c\in\mathbb{R}$ because
\begin{equation}
\int_{|x|\leqslant1}|x|^{-n+2\epsilon}{\rm d}x=\omega_n\int_0^1r^{2\epsilon-1}{\rm d}r<\infty
\end{equation}
where $\omega_n$ is the surface area of the unit ball; that is, local Coulomb-like singularities are permissible for $n=3$. However, to guarantee that the function \eqref{coulomb} belongs to $L^{\tilde{q}}(\mathbb{R}^n)$, we must assume $\epsilon>n(1/2-1/\tilde{q})$ because
\begin{equation}
\int_{|x|\leqslant1}|x|^{-\tilde{q}n/2+\tilde{q}\epsilon}{\rm d}x=\omega_n\int_0^1r^{-\tilde{q}n/2+\tilde{q}\epsilon+n-1}{\rm d}r.
\end{equation}
This says that Coulomb-like singularities are permitted for $n=3$ only if $\tilde{q}<3$.\par
Since the Enss--Weder time-dependent method was devised, many authors have applied it to establish the uniqueness of the interaction potentials for various quantum models. References \cite{AdKaKaTo}, \cite{AdFuIs}, \cite{AdMa}, \cite{Is3}, \cite{Ni1}, \cite{Ni2}, \cite{VaWe}, and \cite{We1} investigated models with external electric fields, whereas references \cite{Is4} and \cite{Ju} studied fractional and relativistic Laplacians, and \cite{Is5} studied time-dependent harmonic oscillators. References \cite{Wa1}, \cite{Wa2}, \cite{Wa3}, \cite{We2} and \cite{We3} applied the method to non-linear Schr\"odinger and Hartree--Fock equations. As for repulsive Hamiltonians, Theorem \ref{the1} was first proved in \cite{Ni3} assuming $V\in C^\infty(\mathbb{R}^n)$ satisfies
\begin{equation}
|\partial_x^\beta V(x)|\lesssim_\beta\langle x\rangle^{-1/2-\epsilon-|\beta|}
\end{equation} 
for $\epsilon>0$ and any $\beta\in(\mathbb{N}\cup\{0\})^n$ without any singularities. Later \cite{Is1} treated the singular part $V^{\rm sing}\in L^{\tilde{q}}(\mathbb{R}^n)$ with compact support and relaxed the regularities to $C^1(\mathbb{R}^n)$ and the decay assumption to \eqref{reg}.\par
The time evolution of Schr\"odinger operator with a repulsive term has interesting features from both physical and mathematical aspects. By solving the Newton equation ${\rm d}^2{x(t)}/{\rm d}t^2=4x(t)$, we find that the classical orbit of the particle has exponential order $x(t)=O(e^{2t})$ as $t\rightarrow\infty$. On the basis of this observation, \cite{BoCaHaMi} proved the asymptotically completeness of the wave operators subject to
\begin{equation}
|V(x)|\lesssim(\log\langle x\rangle)^{-1-\epsilon}\label{log}
\end{equation}
for $\epsilon>0$ and a $p^2$-compact singular term. The motion of the classical orbit implies that the decay in \eqref{log} is short-range. Indeed, \cite{Is2} constructed concrete examples of potential functions that have slower decay than $(\log\langle x\rangle)^{-1}$ and of which the wave operators do not exist. Recently, with the studies of \cite{It1} and \cite{It2} on stationary scattering and the limiting absorption principle, spectral and scattering theory for repulsive Hamiltonians is still making progress.\par
The free time evolution $e^{-{\rm i}tH_0}$, called the Mehler formula and introduced by H\"ormander \cite{Ho}, has a useful representation. Let us here state this formula for repulsive Hamiltonians. As is well-known, the version for the harmonic oscillator involves trigonometric functions. In contrast the repulsive version involves hyperbolic functions. For $t\not=0$, the unitary propagator $e^{-{\rm i}tH_0}$ is represented as
\begin{equation}
e^{-{\rm i}tH_0}=\mathscr{M}(\tanh(2t)/2)\mathscr{D}(\sinh(2t)/2)\mathscr{F}\mathscr{M}(\tanh(2t)/2)\label{mehler1}
\end{equation}
where $\mathscr{M}$ and $\mathscr{D}$ denote multiplication and dilation, with
\begin{gather}
\mathscr{M}(t)\phi(x)=e^{{\rm i}x^2/(4t)}\phi(x),\\
\mathscr{D}(t)\phi(x)=(2{\rm i}t)^{-n/2}\phi(x/(2t)),
\end{gather}
and $\mathscr{F}$ denotes the Fourier transform over $L^2(\mathbb{R}^n)$. A straightforward calculation yields
\begin{gather}
\mathscr{D}(\sinh(2t)/2)={\rm i}^{n/2}\mathscr{D}(\cosh(2t)/2)\mathscr{D}(\tanh(2t)/2),\\
\mathscr{M}(\tanh(2t)/2)\mathscr{D}(\cosh(2t)/2)\mathscr{M}(-\tanh(2t)/2)\nonumber\\
=\mathscr{M}(\coth(2t)/2)\mathscr{D}(\cosh(2t)/2)
\end{gather}
and
\begin{equation}
e^{-{\rm i}tH_0}={\rm i}^{n/2}\mathscr{M}(\coth(2t)/2)\mathscr{D}(\cosh(2t)/2)e^{-{\rm i}\tanh(2t)p^2/2},\label{mehler2}
\end{equation}
because
\begin{equation}
e^{-{\rm i}tp^2}=\mathscr{M}(t)\mathscr{D}(t)\mathscr{F}\mathscr{M}(t).
\end{equation}

Using the Plancherel formula for the Radon transform \cite[Theorem 2.17 in Chap. 1]{He} as in the proof of \cite[Theorem 1.1]{EnWe}, the following reconstruction theorem yields the proof of Theorem \ref{the1}. For the remainder of this paper, we therefore focus on proving Theorem \ref{the2}. Regarding notation throughout this paper, $(\cdot,\cdot)$ denotes the $L^2$-scalar product, $\|\cdot\|$ denotes the $L^2$-norm, and $\|\cdot\|_\mathscr{B}$ denotes the $L^2$-operator norm. For the $L^r$-norm and $L^r$-operator norm with $r\not=2$, we write $\|\cdot\|_r$ and $\|\cdot\|_{\mathscr{B}(L^r)}$.

\begin{The}\label{the2}
Let $\Phi_0\in\mathscr{S}(\mathbb{R}^n)$ such that $\mathscr{F}\Phi_0\in C_0^\infty(\mathbb{R}^n)$ and $\hat{v}=v/|v|$ be the normalization of $v\in\mathbb{R}^n$. Let $\Phi_v=e^{{\rm i}v\cdot x}\Phi_0$ and $\Psi_v$ have the same properties. Then
\begin{gather}
\lim_{|v|\rightarrow\infty}|v|({\rm i}[S,p_j]\Phi_v,\Psi_v)=\int_{-\infty}^\infty\left\{(V^{\rm sing}(x+\hat{v}t)p_j\Phi_0,\Psi_0)-(V^{\rm sing}(x+\hat{v}t)\Phi_0,p_j\Psi_0)\right.\nonumber\\
+\left.({\rm i}(\partial_{x_j}V^{\rm reg})(x+\hat{v}t)\Phi_0,\Psi_0)\right\}{\rm d}t/2
\end{gather}
holds where $p_j$ is the $j$th component of $p$.
\end{The}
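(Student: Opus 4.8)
The plan is to follow the Enss--Weder time-dependent scheme adapted to the repulsive dynamics. The starting point is the Duhamel/Dollard-type formula for the scattering operator $S = S(V)$. First I would write $\mathrm{i}[S,p_j] = \mathrm{i}(Sp_j - p_jS)$ and use the intertwining relation together with the representation $S = (W^+)^*W^-$ to express the matrix element $(\mathrm{i}[S,p_j]\Phi_v,\Psi_v)$ as a time integral over the full line of terms of the form $(e^{\mathrm{i}tH_0}V e^{-\mathrm{i}tH_0}(\cdots)\Phi_v, \Psi_v)$, i.e. the Heisenberg evolution of $V$ under the free repulsive flow acting on the commutator structure. The key input here is that $S - I$ can be written via the standard formula $S - I = -\mathrm{i}\int_{-\infty}^\infty e^{\mathrm{i}tH_0}V\, \Omega(t)\, e^{-\mathrm{i}tH_0}\,\mathrm{d}t$ on a suitable dense set, modulo controlling the wave-operator corrections; commuting with $p_j$ and keeping only the leading contribution reduces everything to the free-evolved potential $e^{\mathrm{i}tH_0}V e^{-\mathrm{i}tH_0}$ sandwiched between the high-velocity states.

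The heart of the argument is then a high-velocity stationary-phase analysis of $e^{\mathrm{i}tH_0}V(x)e^{-\mathrm{i}tH_0}\Phi_v$. Here I would invoke the Mehler formula \eqref{mehler2}: conjugating the multiplication operator $V(x)$ by $e^{-\mathrm{i}tH_0} = \mathrm{i}^{n/2}\mathscr{M}(\coth(2t)/2)\mathscr{D}(\cosh(2t)/2)e^{-\mathrm{i}\tanh(2t)p^2/2}$ turns it, after the dilation, into $V$ evaluated at a rescaled argument, and the factor $e^{\mathrm{i}v\cdot x}$ in $\Phi_v = e^{\mathrm{i}v\cdot x}\Phi_0$ produces a translation in momentum. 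The decisive point is the change of variables $t \mapsto \tau/|v|$: after rescaling time by $|v|$, the propagated wave packet is translated by $\hat v\,\tau + O(1/|v|)$ while $\tanh(2t/|v|)/2 \approx t/|v| \to 0$, so that the free repulsive evolution locally linearizes to the free translation $x \mapsto x + \hat v \tau$. This is exactly what produces $V^{\mathrm{sing}}(x+\hat v t)$ and $(\partial_{x_j}V^{\mathrm{reg}})(x+\hat v t)$ in the limit, with the extra factor of $1/2$ coming from the $\tanh(2t)/2 \sim t$ Jacobian. For the regular part one integrates by parts once in $t$ (or uses the commutator $[p_j,V^{\mathrm{reg}}] = -\mathrm{i}(\partial_{x_j}V^{\mathrm{reg}})$) to convert the $p_j$-commutator into the gradient term; for the singular part the $L^q$-bound on $V^{\mathrm{sing}}$ is used via a Sobolev/Hölder estimate to control $\|V^{\mathrm{sing}}(x+\hat v t)\Phi_0\|$ uniformly and to justify passing the limit through the $t$-integral by dominated convergence.

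The main obstacle I anticipate is the error control in two places. First, bounding the wave-operator correction terms (the difference between the true Dollard-type formula for $S-I$ and the leading free-evolved term) uniformly in $|v|$: this requires propagation estimates for $e^{-\mathrm{i}tH_0}$ showing that the relevant remainders are $o(1/|v|)$, which is where the repulsive (exponentially expanding) dynamics genuinely differs from the free Laplacian and where the decay assumption \eqref{reg} on $V^{\mathrm{reg}}$ (short-range relative to the $O(e^{2t})$ classical orbits) and the compact support of $V^{\mathrm{sing}}$ do the work. Second, justifying the interchange of limit and integral for the singular part near $t=0$ and for large $|t|$ simultaneously: near $t=0$ one needs that $\|V^{\mathrm{sing}}(x+\hat v t)\Phi_0\|$ is integrable (handled by Hölder since $V^{\mathrm{sing}}\in L^q$ with $q$ as in \eqref{q}, combined with the smoothness of $\Phi_0$), and for large $|t|$ one needs the wave packet $\Phi_0$ to escape the compact support of $V^{\mathrm{sing}}$ fast enough, which follows from non-stationary phase since $\mathscr{F}\Phi_0 \in C_0^\infty$ with $0 \notin \supp \mathscr{F}\Phi_0$ may be arranged. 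Assembling these estimates and then taking $|v|\to\infty$ term by term yields the claimed identity.
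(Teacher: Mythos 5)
Your outline reproduces the general Enss--Weder scheme that the paper also follows (Duhamel-type expansion of $S$, Mehler formula, the rescaling of time that produces the factor $1/2$ and the translated arguments $x+\hat v t$, propagation estimates to kill the remainders), but it has a genuine gap exactly at the point that distinguishes this theorem from the earlier result in \cite{Is1}: the control of the singular part at the borderline exponents $q=2$ ($n\leqslant3$) and $q=n/2$ ($n\geqslant5$). The decisive estimate is $\int_{-\infty}^\infty\|V^{\rm sing}(x)e^{-{\rm i}tH_0}\Phi_v\|\,{\rm d}t=O(|v|^{-1})$ (Lemma \ref{lem5}). After conjugating with the Mehler formula, the term that survives away from $t=0$ has the form $\|V^{\rm sing}(\sinh(2t)(x+v))\|\cdot\|\langle p/\sinh(2t)\rangle^{-2}(\cdots)\|_\infty$, and one must bound the Fourier multiplier $\langle \lambda p\rangle^{-2}$ on $L^\infty$ \emph{uniformly in the scale} $\lambda=1/\sinh(2t)$. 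In \cite{Is1} this was done with the H\"ormander--Mikhlin theorem, which forces $\tilde q>2$; at $q=2$ that route is closed, and your proposal's appeal to ``a Sobolev/H\"older estimate to control $\|V^{\rm sing}(x+\hat v t)\Phi_0\|$ uniformly'' does not supply a substitute. The paper's replacement is a scale-invariant Carlson--Beurling inequality (Lemma \ref{lem6}), $\sup_\lambda\|m(\lambda p)\|_{\mathscr{B}(L^\infty)}\lesssim\|m\|^{1-n/4}\|m\|_{\dot H^2}^{n/4}$, whose homogeneity exactly cancels the $\lambda$-dependence. Without this (or an equivalent uniform multiplier bound) the key propagation estimate, and hence the theorem at the stated exponents, is not reached.

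Two smaller points. First, your suggestion to handle large $|t|$ by non-stationary phase under the extra arrangement $0\notin\supp\mathscr{F}\Phi_0$ is neither licensed by the hypotheses nor needed: for the repulsive dynamics the wave packet is translated by $\sinh(2t)v$, so it leaves the compact support of $V^{\rm sing}$ exponentially fast for every $\Phi_0$ with $\mathscr{F}\Phi_0\in C_0^\infty$; the paper exploits exactly this, together with a cutoff $|t|\lessgtr|v|^\sigma$ and repeated integration by parts in the Fourier integral. Second, for $n\geqslant5$ with $q=n/2$ the potential is only infinitesimally $p^2$-bounded, so the existence of the wave operators defining $S$ is not automatic from \cite{BoCaHaMi}; the paper proves it directly via Cook's method with a H\"older/Hausdorff--Young estimate (Lemma \ref{lem4}), a preliminary your proposal leaves unaddressed. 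Finally, note that the paper organizes the remainder control through Graf-type modified wave operators $\Omega_v^\pm$ with the commuting modifier $e^{-{\rm i}\int_0^tV^{\rm reg}(\sinh(2\tau)v)\,{\rm d}\tau}$, which is the clean way to make your ``integrate by parts in $t$ for the regular part'' rigorous.
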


\section{High-dimensional case}\label{high_dim_case}

In this section, we consider instances for which $n\geqslant5$ and give a proof of Theorem \ref{the2}. For this purpose, we give the direct proof of the existence of the wave operators as we mentioned before. We then can demonstrate the proof of Theorem \ref{the2} in the same manner as given in \cite{Is1}. We now start from the self-adjointness of $H$.

\begin{Lem}\label{lem3}
Let $V^{\rm sing}\in \mathscr{V}^{\rm sing}$. then $V^{\rm sing}$ is $H_0$-bounded infinitesimally.
\end{Lem}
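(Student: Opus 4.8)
The plan is to reduce the statement to the classical fact that a function in $L^{n/2}(\mathbb{R}^n)$ is infinitesimally $p^2$-bounded (\cite[Theorem X.21]{ReSi2}; this can also be seen from the Sobolev embedding $H^2(\mathbb{R}^n)\hookrightarrow L^{2n/(n-4)}(\mathbb{R}^n)$, valid for $n\geqslant5$, the H\"older inequality, and truncation of $V^{\rm sing}$ at large values), combined with a localization that exploits the compact support of $V^{\rm sing}$. Fix $\chi\in C_0^\infty(\mathbb{R}^n)$ with $\chi\equiv1$ on a neighbourhood of $\supp V^{\rm sing}$, so that $V^{\rm sing}\psi=V^{\rm sing}\chi\psi$, and fix $\tilde{\chi}\in C_0^\infty(\mathbb{R}^n)$ with $\tilde{\chi}\equiv1$ on $\supp\nabla\chi$. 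Since $H_0$ is essentially self-adjoint on the invariant core $\mathscr{S}(\mathbb{R}^n)$ and $V^{\rm sing}$ is a closed multiplication operator, it suffices to prove the bound for $\psi\in\mathscr{S}(\mathbb{R}^n)$ and then pass to the limit along the graph norm.

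For $\psi\in\mathscr{S}(\mathbb{R}^n)$, using $p^2\psi=H_0\psi+x^2\psi$ one has
\begin{equation}
p^2(\chi\psi)=\chi p^2\psi+[p^2,\chi]\psi=\chi H_0\psi+\chi x^2\psi-(\Delta\chi)\psi-2{\rm i}(\nabla\chi)\cdot p\psi.
\end{equation}
Because $\chi x^2$ and $\Delta\chi$ are bounded with compact support, the first three terms contribute at most $\|H_0\psi\|+C\|\psi\|$, and the last is bounded by $C\|\tilde{\chi}p\psi\|$. The main obstacle is that $D(H_0)\not\subset H^1(\mathbb{R}^n)$, so $\|p\psi\|$ is not controlled by $\|H_0\psi\|+\|\psi\|$ and $\|\tilde{\chi}p\psi\|$ must be estimated by genuine localization. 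I would integrate by parts,
\begin{equation}
\sum_{k=1}^n\|\tilde{\chi}p_k\psi\|^2=(\tilde{\chi}^2p^2\psi,\psi)-{\rm i}\sum_{k=1}^n((\partial_{x_k}\tilde{\chi}^2)p_k\psi,\psi),
\end{equation}
estimate $(\tilde{\chi}^2p^2\psi,\psi)=(\tilde{\chi}^2H_0\psi,\psi)+(\tilde{\chi}^2x^2\psi,\psi)\leqslant C(\|H_0\psi\|\,\|\psi\|+\|\psi\|^2)$, and, using $|\partial_{x_k}\tilde{\chi}^2|\lesssim|\tilde{\chi}|$, bound the second sum by $C\|\psi\|\big(\sum_k\|\tilde{\chi}p_k\psi\|^2\big)^{1/2}$; absorbing the latter by $2ab\leqslant a^2+b^2$ yields $\|\tilde{\chi}p\psi\|\leqslant C(\|H_0\psi\|+\|\psi\|)$.

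Combining these estimates gives $\|p^2(\chi\psi)\|\leqslant C(\|H_0\psi\|+\|\psi\|)$ for $\psi\in\mathscr{S}(\mathbb{R}^n)$, with $C$ depending only on $\chi$. Applying the infinitesimal $p^2$-bound to $\chi\psi\in H^2(\mathbb{R}^n)$, for each $\delta>0$ there is $C_\delta$ with
\begin{equation}
\|V^{\rm sing}\psi\|=\|V^{\rm sing}\chi\psi\|\leqslant\delta\|p^2(\chi\psi)\|+C_\delta\|\chi\psi\|\leqslant\delta C\|H_0\psi\|+(\delta C+C_\delta)\|\psi\|,
\end{equation}
and taking $\delta$ arbitrarily small shows that the relative $H_0$-bound is zero. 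Extending the estimate from the core $\mathscr{S}(\mathbb{R}^n)$ to all $\psi\in D(H_0)$ by closedness of $V^{\rm sing}$ completes the proof; the only delicate point is the localized first-derivative bound in the previous paragraph, everything else being routine.
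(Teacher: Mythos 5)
Your proof is correct, and while it shares the paper's overall strategy --- localize with a cutoff $\chi$ equal to $1$ near $\supp V^{\rm sing}$ and then invoke the infinitesimal $p^2$-bound of $V^{\rm sing}$ --- the mechanism by which you transfer control from $H_0$ to $p^2$ is genuinely different. The paper leans on the resolvent identities $\chi({\rm i}+H_0)^{-1}=({\rm i}+p^2)^{-1}B_1$ and $\chi({\rm i}+p^2)^{-1}=({\rm i}+H_0)^{-1}B_2$ imported from \cite[Lemma 2.3]{BoCaHaMi}, so the delicate fact that $H_0$ and $p^2$ agree locally modulo bounded errors is a black box. You instead prove the local elliptic estimate $\|p^2(\chi\psi)\|\lesssim\|H_0\psi\|+\|\psi\|$ directly on the core $\mathscr{S}(\mathbb{R}^n)$ via the commutator expansion $p^2(\chi\psi)=\chi H_0\psi+\chi x^2\psi-(\Delta\chi)\psi-2{\rm i}(\nabla\chi)\cdot p\psi$, and you correctly identify the only nontrivial term, $\|\tilde{\chi}p\psi\|$, which cannot be bounded globally because $D(H_0)$ need not embed in $H^1(\mathbb{R}^n)$; your quadratic-form integration by parts with the absorption trick $2ab\leqslant a^2+b^2$ handles it cleanly, and the final passage from $\mathscr{S}(\mathbb{R}^n)$ to $D(H_0)$ by closedness of the maximal multiplication operator is the standard closing step. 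What your route buys is a self-contained, elementary argument that makes explicit exactly which localized quantity must be controlled (and in fact reproves the essential content of the cited lemma in the special case needed); what the paper's route buys is brevity and reuse of machinery already established for the direct scattering problem. Both ultimately rest on the same input, namely that $V^{\rm sing}\in L^q$ with $q$ as in \eqref{q} is infinitesimally $p^2$-bounded.
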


\begin{proof}
Let $\chi\in C_0^{\infty}(\mathbb{R}^n)$ be such that $\chi=1$ near the support of $V^{\rm sing}$. From the proof given in \cite[Lemma 2.3]{BoCaHaMi}, we have
\begin{gather}
\chi({\rm i}+H_0)^{-1}=({\rm i}+p^2)^{-1}B_1,\quad\chi({\rm i}+p^2)^{-1}=({\rm i}+H_0)^{-1}B_2,
\end{gather}
where $B_1$ and $B_2$ are bounded operators. Let $\phi\in\mathscr{D}(H_0)$. For any $\epsilon>0$, there exists $C_\epsilon>0$ such that
\begin{gather}
\|V^{\rm sing}\phi\|=\|V^{\rm sing}\chi^2({\rm i}+H_0)^{-1}({\rm i}+H_0)\phi\|\nonumber\\
\leqslant\epsilon\|p^2\chi({\rm i}+p^2)^{-1}B_1({\rm i}+H_0)\phi\|+C_\epsilon\|\chi({\rm i}+p^2)^{-1}B_1({\rm i}+H_0)\phi\|\nonumber\\
=\epsilon\|p^2\chi({\rm i}+p^2)^{-1}B_1({\rm i}+H_0)\phi\|+C_\epsilon\|({\rm i}+H_0)^{-1}B_2B_1({\rm i}+H_0)\phi\|
\end{gather}
because $V^{\rm sing}$ is $p^2$-bounded infinitesimally. We find that $p^2\chi({\rm i}+p^2)^{-1}B_1$ and $({\rm i}+H_0)^{-1}B_2B_1({\rm i}+H_0)$ are bounded. This completes the proof.
\end{proof}

\begin{Lem}\label{lem4}
The wave operators \eqref{wave_op} exist.
\end{Lem}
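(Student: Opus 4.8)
The plan is to establish the existence of $W^\pm$ via Cook's method, which reduces the problem to showing that $t\mapsto V e^{-{\rm i}tH_0}\phi$ is integrable at $\pm\infty$ for $\phi$ in a suitable dense set. First I would fix $\phi$ in the dense subspace consisting of vectors of the form $\phi=\mathscr{F}^{-1}\psi$ with $\psi\in C_0^\infty(\mathbb{R}^n)$ (or more conveniently $\phi$ with $\mathscr{F}\phi\in C_0^\infty$, so that $e^{-{\rm i}\tanh(2t)p^2/2}\phi$ stays in a fixed compact frequency band and is smooth). Since $H$ is self-adjoint with $\mathscr{D}(H)=\mathscr{D}(H_0)$ by Lemma \ref{lem3}, and $e^{-{\rm i}tH_0}\phi\in\mathscr{D}(H_0)=\mathscr{D}(H)$, Cook's estimate gives
\begin{equation}
\|(e^{{\rm i}tH}e^{-{\rm i}tH_0}-e^{{\rm i}sH}e^{-{\rm i}sH_0})\phi\|\leqslant\int_s^t\|Ve^{-{\rm i}\tau H_0}\phi\|\,{\rm d}\tau,
\end{equation}
so it suffices to bound $\|Ve^{-{\rm i}\tau H_0}\phi\|$ by an integrable function of $\tau$ as $\tau\to\pm\infty$. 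The regular part $V^{\rm reg}$ is bounded and, being continuous with $|V^{\rm reg}(x)|\lesssim\langle x\rangle^{-\epsilon}$, it is a $p^2$-bounded (indeed $H_0$-bounded) decaying potential; the estimate for it is standard and essentially contained in \cite{BoCaHaMi} or \cite{Is1}, so I would dispatch it quickly and concentrate on $V^{\rm sing}$.

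The key step is the bound on $\|V^{\rm sing}e^{-{\rm i}\tau H_0}\phi\|$. Here I would exploit the Mehler formula \eqref{mehler2}: up to a unimodular constant,
\begin{equation}
e^{-{\rm i}\tau H_0}={\rm i}^{n/2}\mathscr{M}(\coth(2\tau)/2)\mathscr{D}(\cosh(2\tau)/2)e^{-{\rm i}\tanh(2\tau)p^2/2}.
\end{equation}
Since $\mathscr{M}$ is a unitary multiplier that does not affect $L^2$-norms, and since $V^{\rm sing}$ is compactly supported, the crucial object is $V^{\rm sing}\,\mathscr{D}(\cosh(2\tau)/2)\eta_\tau$ where $\eta_\tau=e^{-{\rm i}\tanh(2\tau)p^2/2}\phi$ is a Schwartz function with $\mathscr{F}\eta_\tau$ supported in a fixed compact set and with Schwartz seminorms uniformly bounded in $\tau$ (because $\tanh(2\tau)$ is bounded). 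The dilation sends $\eta_\tau(x)\mapsto(\ldots)^{-n/2}\eta_\tau(x/\cosh(2\tau))$, so on the compact support $K$ of $V^{\rm sing}$ we are evaluating $\eta_\tau$ at arguments of size $O(e^{-2|\tau|})$, i.e. near the origin. Writing $\lambda_\tau=\cosh(2\tau)/2\to\infty$, I would estimate, using Hölder with the exponent $q$ from \eqref{q} and its conjugate $q'$,
\begin{equation}
\|V^{\rm sing}\mathscr{D}(\lambda_\tau)\eta_\tau\|\leqslant\|V^{\rm sing}\|_q\,\|\mathscr{D}(\lambda_\tau)\eta_\tau\|_{L^{q'}(K)}\lesssim\|V^{\rm sing}\|_q\,\lambda_\tau^{-n/q'}\,\|\eta_\tau(\cdot/\lambda_\tau)\|_{L^{q'}(K)},
\end{equation}
and then bound $\|\eta_\tau(\cdot/\lambda_\tau)\|_{L^{q'}(K)}$ using the uniform sup-bound on $\eta_\tau$ together with $|K|<\infty$. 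This produces a decay factor $\lambda_\tau^{-n/q'}=\lambda_\tau^{-n(1-1/q)}$, which for the given values of $q$ is $\lambda_\tau^{-n/2}$ when $n\le3$ (so $\lambda_\tau^{-1},\lambda_\tau^{-3/2}$), and $\lambda_\tau^{-n/q'}=\lambda_\tau^{-n/2-1}\cdot\lambda_\tau^{?}$... more precisely for $n\ge5$, $q=n/2$ gives $n/q'=n-2$, hence decay $\lambda_\tau^{-(n-2)}$, which since $n\geqslant5$ is at least $\lambda_\tau^{-3}$. In every case $\lambda_\tau\sim e^{2|\tau|}/4$, so the bound decays exponentially in $|\tau|$ and is therefore integrable; for $n=4$ one uses $q>2$ and gets $\lambda_\tau^{-n/q'}$ with $n/q'=4(1-1/q)>2$, again fine. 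Letting $s\to\pm\infty$ in Cook's estimate then shows $(e^{{\rm i}tH}e^{-{\rm i}tH_0}\phi)$ is Cauchy, hence $W^\pm\phi$ exists on the dense set, and a standard $\varepsilon/3$ argument extends the existence to all of $L^2$.

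The main obstacle I anticipate is making the uniform-in-$\tau$ control of $\eta_\tau=e^{-{\rm i}\tanh(2\tau)p^2/2}\phi$ precise and clean: one must check that the free Schrödinger evolution over the bounded time interval $\{\tanh(2\tau)/2:\tau\in\mathbb{R}\}$ keeps $\eta_\tau$ in a bounded set of $\mathscr{S}(\mathbb{R}^n)$ — in particular keeps $\|\eta_\tau\|_\infty$ bounded and its frequency support fixed — so that the dilation argument yields honest uniform constants rather than $\tau$-dependent ones. This is routine since $e^{-{\rm i}sp^2/2}$ is strongly continuous and $s$ ranges over a compact set, but it is the point where the hyperbolic Mehler structure (the fact that $\tanh$ saturates while $\cosh$ blows up) is doing the real work: it is exactly this that converts the singularity of $V^{\rm sing}$, controlled merely in $L^q$, into an exponentially small contribution. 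A secondary, lesser obstacle is treating the boundary $t=0$ of the Mehler formula, but since we only need integrability near $\pm\infty$ and $V$ is $H_0$-bounded (Lemma \ref{lem3}), the integrand is locally bounded on any finite interval and no issue arises there.
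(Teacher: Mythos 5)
Your plan is sound and reaches the right conclusion, but on the key estimate for $V^{\rm sing}$ it takes a genuinely different route from the paper. The paper (which needs this lemma only for $n\geqslant5$; for $n\leqslant4$ existence is quoted from Bony--Carles--H\"afner--Michel via relative compactness) conjugates by the quadratic phase so that $V^{\rm sing}(\cosh(2t)x+\sinh(2t)p)$ becomes the Fourier multiplier $V^{\rm sing}(\sinh(2t)p)$, and then applies H\"older together with Hausdorff--Young to obtain $\|V^{\rm sing}(\sinh(2t)\xi)\|_{n/2}\,\|\mathscr{F}e^{{\rm i}\coth(2t)x^2/2}\phi\|_{2n/(n-4)}\lesssim|\sinh(2t)|^{-2}$; this uses only $V^{\rm sing}\in L^{n/2}$ and never its compact support, but the exponent $2n/(n-4)$ ties it to $n\geqslant5$. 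You instead keep $V^{\rm sing}$ in position space, use the Mehler factorization $e^{-{\rm i}tH_0}=({\rm const})\,\mathscr{M}\mathscr{D}(\cosh(2t)/2)e^{-{\rm i}\tanh(2t)p^2/2}$, and exploit the compact support of $V^{\rm sing}$ together with the uniform bound $\|\eta_\tau\|_\infty\lesssim\|\mathscr{F}\phi\|_1$ for $\eta_\tau=e^{-{\rm i}\tanh(2\tau)p^2/2}\phi$. This buys an argument valid uniformly in all cases of Assumption \ref{ass}, since compact support and $q\geqslant2$ give $V^{\rm sing}\in L^2$ and one may simply bound $\|V^{\rm sing}\|\,\|\mathscr{D}(\lambda_\tau)\eta_\tau\|_{L^\infty(K)}$. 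Your treatment of $|t|\leqslant1$ via the $H_0$-boundedness of Lemma \ref{lem3} and of $V^{\rm reg}$ by citation matches the paper.

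Two bookkeeping slips, neither fatal: (i) for the $L^2$ norm of the product the H\"older partner of $L^q$ is $L^{2q/(q-2)}$, not the conjugate exponent $q'$ with $1/q+1/q'=1$ (that pairing controls the $L^1$ norm); (ii) the decay extracted from $\mathscr{D}(\lambda_\tau)\eta_\tau(x)=(2{\rm i}\lambda_\tau)^{-n/2}\eta_\tau(x/(2\lambda_\tau))$ on the fixed compact set $K$ is $\lambda_\tau^{-n/2}$, coming from the Jacobian prefactor, not $\lambda_\tau^{-n/q'}$. Since $\lambda_\tau\sim e^{2|\tau|}$, any positive power yields integrability, so the conclusion survives once the exponents are corrected.
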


\begin{proof}
We prove $W^+$ only because the proof of the existence of $W^-$ is similar. According to the Cook-Kuroda method \cite[Theorem XI. 4]{ReSi3}, it suffices to prove
\begin{equation}
\int_1^\infty\|V(x)e^{-{\rm i}tH_0}\phi\|{\rm d}t<\infty
\end{equation}
for $\phi\in C_0^\infty(\mathbb{R}^n)$. By assumption \eqref{reg} and \cite[Proposition 2.2]{Is2}, we easily find that
\begin{equation}
\int_1^\infty\|V^{\rm reg}(x)e^{-{\rm i}tH_0}\phi\|{\rm d}t<\infty.
\end{equation}
By \cite[Lemma 6]{Ni3}, the relation
\begin{equation}
e^{{\rm i}tH_0}xe^{-{\rm i}tH_0}=\cosh(2t)x+\sinh(2t)p
\end{equation}
holds. We therefore have
\begin{gather}
\|V^{\rm sing}(x)e^{-{\rm i}tH_0}\phi\|=\|V^{\rm sing}(\cosh(2t)x+\sinh(2t)p)\phi\|\nonumber\\
=\|V^{\rm sing}(\sinh(2t)p)e^{{\rm i}\coth(2t)x^2/2}\phi\|.\label{lem4_1}
\end{gather}
The H\"older inequality \cite[Theorem III. 1]{ReSi1} and the Hausdorff--Young inequality \cite[Theorem IX. 8]{ReSi1} imply that the right-hand side of \eqref{lem4_1} is less than or equal to
\begin{gather}
\|V^{\rm sing}(\sinh(2t)\xi)\|_{n/2}\|\mathscr{F}e^{{\rm i}\coth(2t)x^2/2}\phi\|_{2n/(n-4)}\nonumber\\
\lesssim|\sinh(2t)|^{-2}\|V^{\rm sing}\|_{n/2}\|\phi\|_{2n/(n+4)}.
\end{gather}
This implies that
\begin{equation}
\int_1^\infty\|V^{\rm sing}(x)e^{-{\rm i}tH_0}\phi\|{\rm d}t<\infty
\end{equation}
and that $W^+$ exists.
\end{proof}

\section{Low-dimensional case}\label{low_dim_case}
In this section, we consider instances for which $n\leqslant3$ and $V^{\rm sing}\in L^2(\mathbb{R}^n)$ and give a proof of Theorem \ref{the2}. As mentioned before, the wave operators exist and are asymptotically complete by the result given in \cite{BoCaHaMi}. To prove Theorem \ref{the2}, the key propagation estimate is given by Lemma \ref{lem5}:

\begin{Lem}\label{lem5}
Let $\Phi_v$ be as in Theorem \ref{the2}. Then
\begin{equation}
\int_{-\infty}^\infty\|V^{\rm sing}(x)e^{-{\rm i}tH_0}\Phi_v\|{\rm d}t=O(|v|^{-1})
\end{equation}
hold as $|v|\rightarrow\infty$ for $V^{\rm sing}\in\mathscr{V}^{\rm sing}$.
\end{Lem}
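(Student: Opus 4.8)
The plan is to reduce everything to the Mehler-type formula \eqref{mehler2} and the elementary unitary dilation identity, exactly as in the proof of Lemma \ref{lem4}, but now tracking the $v$-dependence introduced by the boost $\Phi_v=e^{{\rm i}v\cdot x}\Phi_0$. First I would write, using \eqref{mehler2} and the commutation relation $e^{{\rm i}tH_0}xe^{-{\rm i}tH_0}=\cosh(2t)x+\sinh(2t)p$ from \cite{Ni3},
\begin{equation}
\|V^{\rm sing}(x)e^{-{\rm i}tH_0}\Phi_v\|=\|V^{\rm sing}(\sinh(2t)p)e^{{\rm i}\coth(2t)x^2/2}\Phi_v\|,
\end{equation}
valid for $t\neq0$, so that the singular potential is now a function of momentum. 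Since $V^{\rm sing}\in L^2(\mathbb{R}^n)$ for $n\leqslant3$, Plancherel gives $\|V^{\rm sing}(\sinh(2t)p)f\|=|\sinh(2t)|^{-n/2}\|V^{\rm sing}\|_2\,\|\mathscr{F}f\|_\infty$ after a change of variables, and $\|\mathscr{F}f\|_\infty\leqslant(2\pi)^{-n/2}\|f\|_1$. The point of the boost is that $e^{{\rm i}\coth(2t)x^2/2}e^{{\rm i}v\cdot x}\Phi_0$ has $L^1$-norm equal to $\|\Phi_0\|_1$, which is $v$-independent; the $v$-dependence must therefore be extracted by a stationary-phase / non-stationary-phase argument on $\mathscr{F}(e^{{\rm i}\coth(2t)x^2/2}e^{{\rm i}v\cdot x}\Phi_0)(\xi)$, whose phase is $\coth(2t)x^2/2+(v-\xi)\cdot x$ and hence has critical point $x=-\tanh(2t)(v-\xi)$.

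The key mechanism, following \cite{Is1} and ultimately the Enss--Weder scheme, is a splitting of the time integral at a scale comparable to $|v|^{-1}$ together with a change of variable $t\mapsto\tau$ adapted to the flow. Concretely I would first reduce to the regime of small $t$: for $|t|$ bounded away from $0$, $|\sinh(2t)|^{-n/2}$ is integrable against a decaying tail coming from the rapid decay of $\Phi_0\in\mathscr{S}$ (the support of $V^{\rm sing}$ is compact, so $V^{\rm sing}(\sinh(2t)p)$ forces $|p|\gtrsim|\sinh(2t)|^{-1}$, and on that set $\mathscr{F}\Phi_v$ is rapidly decaying in $|v|$ once $|\sinh(2t)|$ is large), giving an $O(|v|^{-N})$ contribution. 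For small $|t|$, I would substitute so that the effective classical velocity is $\sim|v|$; the natural substitution is $s=\sinh(2t)|v|$ or equivalently rescaling $t$ by $|v|$, under which $\mathrm dt\sim|v|^{-1}\mathrm ds$ near $t=0$, and the remaining $s$-integral of $|s|^{-n/2}$ against the Fourier transform of a Schwartz function localized by stationary phase around $\xi\approx\hat v|v|$ is finite and $v$-independent. Combining the two regimes yields the $O(|v|^{-1})$ bound.

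The main obstacle is making the small-$t$ stationary-phase estimate uniform and showing the $s$-integral converges at both ends: near $s=0$ the factor $|s|^{-n/2}$ is badly non-integrable, so one cannot simply bound $\|\mathscr{F}(\cdots)\|_\infty$ by a constant — one must use that when $|\sinh(2t)|$ is very small the critical point $x=-\tanh(2t)(v-\xi)$ stays near $0$ and the momentum constraint $|p|\gtrsim|\sinh(2t)|^{-1}$ from $\mathrm{supp}\,V^{\rm sing}$ is extremely restrictive, so that $V^{\rm sing}(\sinh(2t)p)\mathscr{F}$ applied to $\Phi_v$ is $O(|\sinh(2t)|^{M})$ for any $M$, killing the singularity; near the other end ($|t|$ of order one or $s$ large) one uses the rapid decay of $\mathscr{F}\Phi_0$ in $\xi$ together with the fact that the stationary point is then pushed to large $|\xi|$. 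I expect the cleanest route is to quote the relevant phase-space/propagation estimate from \cite{Is1} directly — the repulsive Mehler formula \eqref{mehler2} makes the operator $V^{\rm sing}(x)e^{-{\rm i}tH_0}$ formally identical, after the dilation, to the free case treated there — and then verify that the $L^2$ (rather than $L^{\tilde q}$ with $\tilde q>2$) hypothesis on $V^{\rm sing}$ still suffices because for $n\leqslant3$ one has $2n/(n-4)$ replaced by $\infty$ and the Hausdorff--Young endpoint $\|\mathscr{F}f\|_\infty\leqslant\|f\|_1$ is available with $\Phi_0$ Schwartz.
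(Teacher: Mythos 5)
There is a genuine gap, and it sits exactly where the lemma is hardest: the region $t\approx 0$. Your plan converts the potential to a Fourier multiplier, $\|V^{\rm sing}(x)e^{-{\rm i}tH_0}\Phi_v\|=\|V^{\rm sing}(\sinh(2t)p)e^{{\rm i}\coth(2t)x^2/2}\Phi_v\|$, and then bounds this by $|\sinh(2t)|^{-n/2}\|V^{\rm sing}\|_2\|\cdot\|_1$. To tame the non-integrable factor $|\sinh(2t)|^{-n/2}$ near $t=0$ you invoke a ``momentum constraint $|p|\gtrsim|\sinh(2t)|^{-1}$ from $\supp V^{\rm sing}$'' and conclude the integrand is $O(|\sinh(2t)|^{M})$ for any $M$. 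This inequality is reversed: if $\supp V^{\rm sing}\subset\{|y|\leqslant R\}$, then $V^{\rm sing}(\sinh(2t)p)$ localizes to $|p|\leqslant R/|\sinh(2t)|$, which for small $|t|$ is an enormous ball and imposes no restriction at all. The claimed vanishing is also directly contradicted by the value at $t=0$: $\|V^{\rm sing}(x)\Phi_v\|=\|V^{\rm sing}\Phi_0\|$ is a fixed nonzero constant. So the integrand near $t=0$ is $O(1)$, not small, and the momentum-space H\"older bound you use there is lossy by exactly the factor $|\sinh(2t)|^{-n/2}$ that you cannot integrate. The $O(|v|^{-1})$ of the lemma does not come from smallness of the integrand near $t=0$; it comes from a position-space argument — the boosted packet translates by $\sinh(2t)v$ and leaves the compact support of $V^{\rm sing}$ after a time window of length $O(|v|^{-1})$, on which the integrand is merely bounded, with rapidly decaying tails outside the window. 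That is the content of the $|t|<1$ estimate from \cite[Proposition 2.2]{Is1}, which the paper simply cites; your substitution $s=\sinh(2t)|v|$ by itself leaves both a non-integrable $|s|^{-n/2}$ at $s=0$ and a stray factor $|v|^{n/2}$, so it cannot replace that argument.

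For the record, the paper splits at $|t|=1$, quotes \cite{Is1} for $|t|<1$, and for $|t|\geqslant1$ translates the potential to $V^{\rm sing}(\sinh(2t)(x+v))$ and splits into $I_1+I_2$ with the cutoffs $F(|x|<|v|/2)$, $F(|x|\geqslant|v|/2)$; the term $I_2$ pairs $\|V^{\rm sing}(\sinh(2t)(\cdot+v))\|_2=|\sinh(2t)|^{-n/2}\|V^{\rm sing}\|_2$ with an $L^\infty$ bound, and this is where the paper's genuinely new ingredient enters: the scale-invariant Carlson--Beurling estimate of Lemma \ref{lem6}, giving $\|\langle p/\sinh(2t)\rangle^{-2}\|_{\mathscr{B}(L^\infty)}\lesssim1$ uniformly in $t$, which replaces the H\"ormander--Mikhlin argument of \cite{Is1} that forced $\tilde q>2$. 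Your large-$|t|$ treatment (non-stationary phase in $\int e^{{\rm i}(\coth(2t)x^2/2+(v-\xi)\cdot x)}\Phi_0(x)\,{\rm d}x$ for $|\xi|\lesssim R/\sinh 2$, where the phase gradient is $\gtrsim|v|$ on the bulk of $\Phi_0$) is viable and even gives $O(|v|^{-N})$ there without Lemma \ref{lem6}; but since that region is harmless anyway, this does not rescue the proposal. You would need to supply (or correctly cite) the small-time propagation estimate to close the proof.
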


Before proving Lemma \ref{lem5}, we prepare Lemma \ref{lem6}. In a previous work \cite{Is1}, the H\"ormander--Mikhlin inequality for the Fourier multipliers (see \cite[Chap. IV, Theorem 3]{St1} for example) plays an important role in proving the same statement for Lemma \ref{lem5} and therefore $\tilde{q}>2$ must be assumed. In our case, we cannot rely on the H\"ormander--Mikhlin inequality because $q=2$. Instead, our strategy exploits a Carlson--Beurling-type inequality with a uniform scale invariance. 

\begin{Lem}\label{lem6}
Let $m\in C^\infty(\mathbb{R}^n)\cap H^2(\mathbb{R}^n)$. Then
\begin{equation}
\sup_{\lambda\in\mathbb{R}}\|m(\lambda p)\|_{\mathscr{B}(L^\infty)}\lesssim\|m\|^{1-n/4}\|m\|_{\dot{H}^2}^{n/4}\label{lem6_0}
\end{equation}
holds where $\dot{H}^2=\dot{H}^2(\mathbb{R}^n)$ denotes the homogeneous Sobolev space of order $2$.
\end{Lem}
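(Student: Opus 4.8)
The plan is to reduce the $L^\infty$-operator-norm bound to an $L^1$-estimate on the convolution kernel, and then to prove that estimate by a weighted Cauchy--Schwarz argument of Carlson--Beurling type. Put $K=\mathscr{F}^{-1}m$, so that the Fourier multiplier $m(\lambda p)$ acts as convolution with $\mathscr{F}^{-1}[m(\lambda\,\cdot)]$; for $\lambda\neq0$ a change of variables gives $\mathscr{F}^{-1}[m(\lambda\,\cdot)](y)=\lambda^{-n}K(y/\lambda)$, while for $\lambda=0$ the operator is $m(0)\,\mathrm{Id}$. Young's inequality for convolutions then yields
\[
\|m(\lambda p)\|_{\mathscr{B}(L^\infty)}=\bigl\|\mathscr{F}^{-1}[m(\lambda\,\cdot)]\bigr\|_{L^1}=\|K\|_{L^1}\qquad(\lambda\neq0),
\]
and $|m(0)|\lesssim\|K\|_{L^1}$ as well, so the supremum over $\lambda\in\mathbb{R}$ is exactly $\|K\|_{L^1}$ — the uniform scale invariance referred to in the statement. (That $m(\lambda p)$ is genuinely bounded on $L^\infty$, i.e. $K\in L^1$, will be justified a posteriori by the estimate below.) It therefore suffices to prove $\|K\|_{L^1}\lesssim\|m\|^{1-n/4}\|m\|_{\dot H^2}^{n/4}$.

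For the $L^1$-bound I would first use that $n\leqslant3$, which is precisely what guarantees $\langle y\rangle^{-2}\in L^2(\mathbb{R}^n)$ (the integral $\int\langle y\rangle^{-4}\,\mathrm{d}y$ converges exactly when $n<4$). Cauchy--Schwarz then gives
\[
\|K\|_{L^1}\leqslant\bigl\|\langle y\rangle^{-2}\bigr\|\,\bigl\|\langle y\rangle^{2}K\bigr\|\lesssim\|K\|+\bigl\||y|^{2}K\bigr\|.
\]
By Plancherel $\|K\|=\|m\|$, and since $\mathscr{F}[|y|^{2}K]=-\Delta m$, which lies in $L^2(\mathbb{R}^n)$ because $m\in H^2$, again by Plancherel $\||y|^{2}K\|=\|\Delta m\|=\|m\|_{\dot H^2}$. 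This already yields $\|K\|_{L^1}\lesssim\|m\|+\|m\|_{\dot H^2}$; the $C^\infty$ hypothesis plays no role beyond making $m(0)$ meaningful.

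To upgrade this sum to the homogeneous product I would invoke the scale invariance once more. For $\mu>0$ set $m_\mu(\eta)=m(\mu\eta)$; then $m_\mu(\lambda p)=m((\mu\lambda)p)$, so the kernel of $m_\mu(\lambda p)$ again has $L^1$-norm $\|K\|_{L^1}$, while changes of variables give $\|m_\mu\|=\mu^{-n/2}\|m\|$ and $\|m_\mu\|_{\dot H^2}=\mu^{2-n/2}\|m\|_{\dot H^2}$. Applying the bound of the previous paragraph to $m_\mu$ in place of $m$ gives
\[
\|K\|_{L^1}\lesssim\mu^{-n/2}\|m\|+\mu^{2-n/2}\|m\|_{\dot H^2}\qquad(\mu>0),
\]
and minimizing the right-hand side over $\mu$ — the balancing choice being $\mu^{2}\sim\|m\|/\|m\|_{\dot H^2}$, the degenerate cases $\|m\|=0$ or $\|m\|_{\dot H^2}=0$ being trivial — produces exactly $\|K\|_{L^1}\lesssim\|m\|^{1-n/4}\|m\|_{\dot H^2}^{n/4}$, which is \eqref{lem6_0}.

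The computations are routine, so there is no real obstacle; the only genuinely load-bearing point is the restriction $n\leqslant3$, which enters exclusively through $\langle y\rangle^{-2}\in L^2(\mathbb{R}^n)$ and is also what forces the exponent $n/4$ on $\|m\|_{\dot H^2}$ to stay below $1$. The conceptual content is recognizing that the scale-invariant quantity $\sup_\lambda\|m(\lambda p)\|_{\mathscr{B}(L^\infty)}$ must be controlled by a homogeneous combination of $\|m\|$ and $\|m\|_{\dot H^2}$, so that a non-sharp Carlson--Beurling inequality can take the place of the H\"ormander--Mikhlin multiplier theorem, which is unavailable at the endpoint $q=2$ occurring in the low-dimensional case.
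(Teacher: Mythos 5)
Your proof is correct, and it is essentially a self-contained re-derivation of the tool the paper simply cites: the paper invokes the Carlson--Beurling inequality $\|m(\lambda p)\|_{\mathscr{B}(L^\infty)}\lesssim\|m(\lambda x)\|^{1-n/4}\|m(\lambda x)\|_{\dot H^2}^{n/4}$ from the Besov-spaces reference as a black box, and then obtains the uniformity in $\lambda$ by computing $\|m(\lambda x)\|^2=|\lambda|^{-n}\|m\|^2$ and $\|m(\lambda x)\|_{\dot H^2}^2=|\lambda|^{-n+4}\|m\|_{\dot H^2}^2$ and checking that the exponents cancel. You instead observe that $\sup_\lambda\|m(\lambda p)\|_{\mathscr{B}(L^\infty)}$ is exactly $\|\mathscr{F}^{-1}m\|_{L^1}$ (the kernel's $L^1$-norm being dilation-invariant), prove $\|\mathscr{F}^{-1}m\|_{L^1}\lesssim\|m\|+\|m\|_{\dot H^2}$ by Cauchy--Schwarz against $\langle y\rangle^{-2}\in L^2$, and recover the homogeneous product by optimizing over dilations $m(\mu\cdot)$ --- which is precisely the standard proof of Carlson--Beurling, with your $\mu$-optimization playing the role of the paper's exponent cancellation. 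The two arguments are mathematically the same; yours buys self-containedness (and dispenses with the paper's $C_0^\infty$-density step, since the Cauchy--Schwarz argument applies directly to $m\in H^2$), at the cost of reproving a classical inequality. You are also right that $n\leqslant3$ is load-bearing: the lemma is stated without that restriction, but it is only used in Section \ref{low_dim_case} where $n\leqslant3$, and both the integrability of $\langle y\rangle^{-4}$ and the positivity of the exponent $1-n/4$ require it.
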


\begin{proof}
By virtue of the Carlson--Beurling inequality \cite[Chap. 1, Theorem 3.1]{PhViLa}, we have
\begin{equation}
\|m(\lambda p)\|_{\mathscr{B}(L^\infty)}\lesssim\|m(\lambda x)\|^{1-n/4}\|m(\lambda x)\|_{\dot{H}^2}^{n/4},\label{lem6_1}
\end{equation}
noting that the constant in front of the right-hand side of \eqref{lem6} is independent of $\lambda\in\mathbb{R}$. We assume here that $m\in C_0^\infty(\mathbb{R}^n)$. We clearly have
\begin{equation}
\|m(\lambda x)\|^2=|\lambda|^{-n}\|m\|^2.\label{lem6_2}
\end{equation}
We also have
\begin{equation}
\int_{\mathbb{R}^n}e^{-{\rm i}x\cdot\xi}m(\lambda x){\rm d}x=|\lambda|^{-n
}\int_{\mathbb{R}^n}e^{-{\rm i}y\cdot\xi/\lambda}m(y){\rm d}y
\end{equation}
and
\begin{gather}
\|m(\lambda x)\|_{\dot{H}^2}^2=\||\xi|^2\mathscr{F}m(\lambda x)\|^2=|\lambda|^{-2n}\int_{\mathbb{R}^n}|\xi|^4\left|\int_{\mathbb{R}^n}e^{-{\rm i}y\cdot\xi/\lambda}m(y){\rm d}y\right|^2{\rm d}\xi/(2\pi)^n\nonumber\\
=|\lambda|^{-n+4}\int_{\mathbb{R}^n}|\eta|^4\left|\int_{\mathbb{R}^n}e^{-{\rm i}y\cdot\eta}m(y){\rm d}y\right|^2{\rm d}\eta/(2\pi)^n
=|\lambda|^{-n+4}\|m\|_{\dot{H}^2}^2\label{lem6_3}
\end{gather}
by changing the variable $\eta=\xi/\lambda$. Noting that $(-n/2)(1-n/4)+(-n/2+2)(n/4)=0$, we have \eqref{lem6_0} from \eqref{lem6_1}, \eqref{lem6_2} and \eqref{lem6_3}. If $m\in C^\infty(\mathbb{R}^n)\cap H^2(\mathbb{R}^n)$ generally, we take $m_k\in C_0^\infty(\mathbb{R}^n)$ such that $m_k\rightarrow m$ as $k\rightarrow m$ in $H^2(\mathbb{R}^n)$. Applying the limit in \eqref{lem6_0} for $m_k$, we have \eqref{lem6_0} for $m$ as in the proof of \cite[Chap. 1, Theorem 3.1]{PhViLa}.
\end{proof}

\begin{proof}[Proof of Lemma \ref{lem5}]
We separate the integral such that
\begin{equation}
\int_{-\infty}^\infty=\int_{|t|<1}+\int_{|t|\geqslant1}
\end{equation}
and we already have
\begin{equation}
\int_{|t|<1}\|V^{\rm sing}(x)e^{-{\rm i}tH_0}\Phi_v\|{\rm d}t=O(|v|^{-1})
\end{equation}
in the proof of \cite[Proposition 2.2]{Is1}. Let us consider the integral over $|t|\geqslant1$. By the Mehler formula, we have
\begin{gather}
e^{-{\rm i}tH_0}\Phi_v=\mathscr{M}(\tanh(2t)/2)e^{-{\rm i}\sinh(2t)v\cdot p}\mathscr{D}(\sinh(2t)/2)\mathscr{F}\mathscr{M}(\tanh(2t)/2)\Phi_0\nonumber\\
=e^{-{\rm i}\sinh(2t)v\cdot p}e^{{\rm i}\cosh(2t)\sinh(2t)v^2/2}e^{{\rm i}\cosh(2t)x\cdot v}e^{-{\rm i}tH_0}\Phi_0
\end{gather}
and
\begin{gather}
|V^{\rm sing}(x)e^{-{\rm i}tH_0}\Phi_v\|=|V^{\rm sing}(x+\sinh(2t)v)e^{-{\rm i}tH_0}\Phi_0\|\nonumber\\
=\|V^{\rm sing}(\sinh(2t)(x+v))\mathscr{F}\mathscr{M}(\tanh(2t)/2)\Phi_0\|\leqslant I_1+I_2
\end{gather}
where
\begin{align}
I_1&=\|V^{\rm sing}(\sinh(2t)(x+v))\langle p/\sinh(2t)\rangle^{-2}F(|x|<|v|/2)\|_\mathscr{B}\nonumber\\
&\quad\times\|\langle x/\sinh(2t)\rangle^2\Phi_0\|,\\
I_2&=\|V^{\rm sing}(\sinh(2t)(x+v))\|\nonumber\\
&\quad\times\|\langle p/\sinh(2t)\rangle^{-2}F(|x|\geqslant|v|/2)\mathscr{F}\mathscr{M}(\tanh(2t)/2)\langle x/\sinh(2t)\rangle^2\Phi_0\|_\infty
\end{align}
as in the proof of \cite[Proposition 2.2]{Is1}. When $|x|<|v|/2$, we have $|\sinh(2t)(x+v)|>|\sinh(2t)v|/2$ and
\begin{gather}
\|V^{\rm sing}(\sinh(2t)(x+v))\langle p/\sinh(2t)\rangle^{-2}F(|x|<|v|/2)\|_\mathscr{B}\nonumber\\
\leqslant\|V^{\rm sing}(\sinh(2t)(x+v))\langle p/\sinh(2t)\rangle^{-2}F(|\sinh(2t)(x+v)|>|\sinh(2t)v|/2)\|_\mathscr{B}\nonumber\\
=\|V^{\rm sing}(x)\langle p\rangle^{-2}F(|x|>|\sinh(2t)v|/2)\|_\mathscr{B}.
\end{gather}
We therefore have
\begin{gather}
\int_{|t|\geqslant1}I_1{\rm d}t\lesssim\|\langle x\rangle^2\Phi_0\|\int_{|t|\geqslant1}\|V^{\rm sing}(x)\langle p\rangle^{-2}F(|x|>|\sinh(2t)v|/2)\|_\mathscr{B}{\rm d}t\nonumber\\
\lesssim|v|^{-1}\int_{|v|\sinh2}^\infty\|V^{\rm sing}(x)\langle p\rangle^{-2}F(|x|>\tau/2)\|_\mathscr{B}{\rm d}\tau.
\end{gather}
We here take $\chi\in C^\infty(\mathbb{R}^n)$ such that $\chi(x)=1$ if $|x|\geqslant1$ and $\chi(x)=0$ if $|x|\leqslant1/2$, and obtain
\begin{gather}
\|V^{\rm sing}(x)\langle p\rangle^{-2}F(|x|>\tau/2)\|_\mathscr{B}\leqslant\|V^{\rm sing}(x)\langle p\rangle^{-2}\chi(2x/\tau)\|_\mathscr{B}\nonumber\\
\lesssim\|V^{\rm sing}(x)\chi(2x/\tau)\langle p\rangle^{-2}\|_\mathscr{B}+\tau^{-1}\|V^{\rm sing}(x)(\nabla\chi)(2x/\tau)\langle p\rangle^{-2}\|_\mathscr{B}\nonumber\\
+\tau^{-2}\|V^{\rm sing}(x)\langle p\rangle^{-2}(\Delta\chi)(2x/\tau)\|_\mathscr{B}\label{lem5_1}
\end{gather}
by calculating the commutator $[\langle p\rangle^{-2},\chi(2x/\tau)]$. Noting that $V^{\rm sing}$ is compactly supported and that the first and second terms of the right-hand side of \eqref{lem5_1} are
\begin{equation}
\int_{|v|\sinh2}^\infty(\|V^{\rm sing}(x)\chi(2x/\tau)\langle p\rangle^{-2}\|_\mathscr{B}+\tau^{-1}\|V^{\rm sing}(x)(\nabla\chi)(2x/\tau)\langle p\rangle^{-2}\|_\mathscr{B}){\rm d}\tau=0
\end{equation}
for $|v|\gg1$, we have
\begin{equation}
\int_{|t|\geqslant1}I_1{\rm d}t\lesssim|v|^{-1}\int_{|v|\sinh2}^\infty\tau^{-2}{\rm d}\tau=O(|v|^{-2}).
\end{equation}
We next consider the integral of $I_2$. Applying Lemma \ref{lem6} for $m(x)=\langle x\rangle^{-2}$ and $\lambda=1/\sinh(2t)$, we have
\begin{equation}
\|\langle p/\sinh(2t)\rangle^{-2}\|_{\mathscr{B}(L^\infty)}\lesssim1.\label{lem5_2}
\end{equation}
We write
\begin{gather}
\mathscr{F}\mathscr{M}(\tanh(2t)/2)\langle x/\sinh(2t)\rangle^{2}\Phi_0\nonumber\\
=\int_{\mathbb{R}^n}e^{-{\rm i}x\cdot y}e^{{\rm i}y^2/(2\tanh(2t))}\langle y/\sinh(2t)\rangle^{2}\Phi_0(y){\rm d}y/(2\pi)^{n/2}.
\end{gather}
Using the relation $e^{-{\rm i}x\cdot y}=\langle x\rangle^{-2}(1+{\rm i}x\cdot \nabla_y)e^{-{\rm i}x\cdot y}$ and integrating by parts, we have
\begin{gather}
\mathscr{F}\mathscr{M}(\tanh(2t)/2)\langle x/\sinh(2t)\rangle^{2}\Phi_0=\langle x\rangle^{-2}\mathscr{F}\mathscr{M}(\tanh(2t)/2)\langle x/\sinh(2t)\rangle^{2}\Phi_0\nonumber\\
+(1/\tanh(2t))\langle x\rangle^{-2}x\cdot\mathscr{F}x\mathscr{M}(\tanh(2t)/2)\langle x/\sinh(2t)\rangle^{2}\Phi_0\nonumber\\
-{\rm i}\langle x\rangle^{-2}x\cdot\mathscr{F}\mathscr{M}(\tanh(2t)/2)\nabla_x\langle x/\sinh(2t)\rangle^{2}\Phi_0.\label{lem5_3}
\end{gather}
The Hausdorff--Young inequality implies that the Fourier transform is bounded from $L^1(\mathbb{R}^n)$ to $L^\infty(\mathbb{R}^n)$, and we therefore have
\begin{gather}
\|F(|x|\geqslant|v|/2)\mathscr{F}\mathscr{M}(\tanh(2t)/2)\langle x/\sinh(2t)\rangle^{2}\Phi_0\|_\infty\nonumber\\
\lesssim |v|^{-2}\|\langle x\rangle^2\Phi_0\|_1+|v|^{-1}(\|\langle x\rangle^3\Phi_0\|_1+\|\langle x\rangle^2\nabla\Phi_0\|_1).\label{lem5_4}
\end{gather}
By \eqref{lem5_2}, \eqref{lem5_4} and
\begin{equation}
\|V^{\rm sing}(\sinh(2t)(x+v))\|=|\sinh(2t)|^{-n/2}\|V^{\rm sing}\|
\end{equation}
we consequently have
\begin{equation}
\int_{|t|\geqslant1}I_2{\rm d}t\lesssim|v|^{-1}\int_{|t|\geqslant1}|\sinh(2t)|^{-n/2}{\rm d}t=O(|v|^{-1}).
\end{equation}
This completes the proof.
\end{proof}

By virtue of \cite[Proposition 2.4]{Is1}, the following propagation estimate holds.
\begin{Lem}\label{lem7}
Let $\Phi_v$ be as in Theorem \ref{the2}. Then
\begin{equation}
\int_{-\infty}^\infty\|\{V^{\rm reg}(x)-V^{\rm reg}(\sinh(2t)v)\}e^{-{\rm i}tH_0}\Phi_v\|{\rm d}t=O(|v|^{-(1+\epsilon)/2})
\end{equation}
holds as $|v|\rightarrow\infty$ for $V^{\rm reg}\in\mathscr{V}^{\rm reg}$.
\end{Lem}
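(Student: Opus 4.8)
The plan is to note that $V^{\rm reg}$ and the hypothesis \eqref{reg} are exactly those of \cite{Is1}, so the assertion coincides with \cite[Proposition 2.4]{Is1}, whose proof carries over verbatim; I outline it. As in the proof of Lemma \ref{lem5}, the Mehler formula gives $e^{-{\rm i}tH_0}\Phi_v=e^{-{\rm i}\sinh(2t)v\cdot p}e^{{\rm i}\cosh(2t)\sinh(2t)v^2/2}e^{{\rm i}\cosh(2t)x\cdot v}e^{-{\rm i}tH_0}\Phi_0$, and since $e^{-{\rm i}\sinh(2t)v\cdot p}$ is unitary, $e^{{\rm i}\cosh(2t)x\cdot v}$ a unimodular multiplier and $V^{\rm reg}(\sinh(2t)v)$ a scalar, one obtains
\begin{equation}
\|\{V^{\rm reg}(x)-V^{\rm reg}(\sinh(2t)v)\}e^{-{\rm i}tH_0}\Phi_v\|=\|\{V^{\rm reg}(x+\sinh(2t)v)-V^{\rm reg}(\sinh(2t)v)\}e^{-{\rm i}tH_0}\Phi_0\|.
\end{equation}
Writing $w=\sinh(2t)v$, I would split $\int_{-\infty}^\infty=\int_{|t|<1}+\int_{|t|\geqslant1}$ and, on each piece, separate $x$-space into $|x|<|w|/2$ and $|x|\geqslant|w|/2$. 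On $|x|<|w|/2$ one has $|w+\theta x|\geqslant|w|/2$, so the mean value theorem and \eqref{reg} give $|V^{\rm reg}(x+w)-V^{\rm reg}(w)|\lesssim|x|\langle w\rangle^{-1-\epsilon}$; on $|x|\geqslant|w|/2$ one estimates $V^{\rm reg}(x+w)$ and the constant $V^{\rm reg}(w)$ separately. The moment bound used throughout, $\|\langle x\rangle^{2M}e^{-{\rm i}tH_0}\Phi_0\|=\|(1+(\cosh(2t)x+\sinh(2t)p)^2)^M\Phi_0\|\lesssim_M(\cosh(2t))^{2M}$ (in particular $\||x|e^{-{\rm i}tH_0}\Phi_0\|=\|(\cosh(2t)x+\sinh(2t)p)\Phi_0\|\lesssim\cosh(2t)$), follows from $e^{{\rm i}tH_0}xe^{-{\rm i}tH_0}=\cosh(2t)x+\sinh(2t)p$ and $\Phi_0\in\mathscr{S}(\mathbb{R}^n)$.

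For $|t|<1$ one has $\cosh(2t)\lesssim1$, so the factorization \eqref{mehler2} shows $e^{-{\rm i}tH_0}$ behaves like a boundedly rescaled free Schr\"odinger propagator and $\|\langle x\rangle^{2M}e^{-{\rm i}tH_0}\Phi_0\|\lesssim_M1$; the $|x|<|w|/2$ piece is then $\lesssim\langle w\rangle^{-1-\epsilon}\||x|e^{-{\rm i}tH_0}\Phi_0\|\lesssim\langle\sinh(2t)v\rangle^{-1-\epsilon}$ and the $|x|\geqslant|w|/2$ piece is $\lesssim_M\langle|w|/2\rangle^{-2M}\|\langle x\rangle^{2M}e^{-{\rm i}tH_0}\Phi_0\|\lesssim_M\langle\sinh(2t)v\rangle^{-2M}$. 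Since the substitution $s=\sinh(2t)$, $r=s|v|$ yields $\int_{|t|<1}\langle\sinh(2t)v\rangle^{-a}\,{\rm d}t\lesssim_a|v|^{-1}$ for $a>1$, this part contributes the decay recorded in \cite[Proposition 2.4]{Is1}, namely $O(|v|^{-(1+\epsilon)/2})$; this is the range that fixes the final exponent and reduces to the classical Enss--Weder propagation estimate for $p^2$.

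For $|t|\geqslant1$ one exploits the exponential spreading of the repulsive flow. The $|x|<|w|/2$ piece is $\lesssim\langle w\rangle^{-1-\epsilon}\||x|e^{-{\rm i}tH_0}\Phi_0\|\lesssim\cosh(2t)\langle\sinh(2t)v\rangle^{-1-\epsilon}\lesssim(\cosh(2t))^{-\epsilon}|v|^{-1-\epsilon}$, integrable over $|t|\geqslant1$. For the tail $|x|\geqslant|w|/2$, write $|e^{-{\rm i}tH_0}\Phi_0(x)|=|\sinh(2t)|^{-n/2}|\psi_t(x/\sinh(2t))|$ with $\psi_t=\mathscr{F}\mathscr{M}(\tanh(2t)/2)\Phi_0\in\mathscr{S}(\mathbb{R}^n)$ of uniformly (in $|t|\geqslant1$) bounded seminorms; then $\|F(|x|\geqslant|w|/2)e^{-{\rm i}tH_0}\Phi_0\|=(\int_{|y|\geqslant|v|/2}|\psi_t(y)|^2\,{\rm d}y)^{1/2}\lesssim_N|v|^{-N}$, so the constant term contributes $|V^{\rm reg}(w)|\,\|F(|x|\geqslant|w|/2)e^{-{\rm i}tH_0}\Phi_0\|\lesssim_N(|\sinh(2t)||v|)^{-\epsilon}|v|^{-N}$, whose $t$-integral converges thanks to the factor $|\sinh(2t)|^{-\epsilon}$. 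For the remaining part $\|V^{\rm reg}(x+w)F(|x|\geqslant|w|/2)e^{-{\rm i}tH_0}\Phi_0\|$ one inserts $|\psi_t(x/\sinh(2t))|\lesssim_N(|x|/|\sinh(2t)|)^{-N}$ (valid since $|x|/|\sinh(2t)|\geqslant|v|/2\geqslant1$), substitutes $z=x+w$, and uses $\int_{|z|\leqslant R}|V^{\rm reg}(z)|^2\,{\rm d}z\lesssim\langle R\rangle^{\max\{n-2\epsilon,0\}}$ from \eqref{reg} (with a logarithm when $2\epsilon=n$), obtaining $\lesssim_N|\sinh(2t)|^{-\min\{\epsilon,n/2\}}|v|^{-N'}$ with $N'$ arbitrarily large. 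Hence the $|t|\geqslant1$ part is $O(|v|^{-1-\epsilon})$.

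The hard part is this tail estimate in the range $|t|\geqslant1$. In the free Schr\"odinger case the analogous integral diverges for $\epsilon<1$, because the wave packet spreads only linearly; here the exponential spread $\sim|\sinh(2t)|$ simultaneously makes $\langle\sinh(2t)v\rangle^{-1-\epsilon}$ exponentially small on the bulk \emph{and} lets the Schwartz tail of $e^{-{\rm i}tH_0}\Phi_0$ reach back to the region near the origin where $V^{\rm reg}$ need not be small. The decisive point is to use the genuine decay \eqref{reg} of $V^{\rm reg}$ — rather than a crude volume bound — after the change of variables $z=x+\sinh(2t)v$: this is what supplies the positive exponent in $|\sinh(2t)|^{-\min\{\epsilon,n/2\}}$ and makes the $t$-integral converge while keeping an arbitrarily high power of $|v|^{-1}$.
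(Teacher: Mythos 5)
Your proposal matches the paper exactly: the paper gives no proof of Lemma \ref{lem7} at all, but simply invokes \cite[Proposition 2.4]{Is1}, which applies verbatim because the hypotheses on $V^{\rm reg}$ are unchanged from the earlier work. Your additional reconstruction of that cited argument (Mehler conjugation, the split $|x|\lessgtr|\sinh(2t)v|/2$, the moment bound from $e^{{\rm i}tH_0}xe^{-{\rm i}tH_0}=\cosh(2t)x+\sinh(2t)p$, and the large-time tail handled via the rescaled profile $\psi_t$ together with the decay \eqref{reg} after the shift $z=x+\sinh(2t)v$) is sound and consistent with the claimed rate.
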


As in \cite{Is1}, we introduce the modified wave operators
\begin{equation}
\Omega_v^\pm=\slim_{t\rightarrow\pm\infty}e^{{\rm i}tH}U_v(t),\quad U_v(t)=e^{-{\rm i}tH_0}e^{-{\rm i}\int_0^tV^{\rm reg}(\sinh(2\tau)v){\rm d}\tau}.
\end{equation}
The Dollard-type modified wave operators were applied even for short-range inverse scattering under the electric fields by \cite{Ni1} and \cite{Ni2}. Later, the Graf-type \cite{Gr} (or Zorbas-type \cite{Zo}) modified wave operators were first introduced in \cite{AdMa} for inverse scattering in the Stark effect instead of the Dollard-type. Since then, the Graf-type modification for inverse scattering under the electric fields has been the standard method (see also \cite{AdKaKaTo}, \cite{AdFuIs} and \cite{Is3}). Our modifier $e^{-{\rm i}\int_0^tV^{\rm reg}(\sinh(2\tau)v){\rm d}\tau}$ also commutes with any operator as with the Graf-type.\par
The following Lemma \ref{lem8} can be proven as in \cite[Proposition 2.5]{Is1} by virtue of Lemmas \ref{lem5} and \ref{lem7}.

\begin{Lem}\label{lem8}
Let $\Phi_v$ be as in Theorem \ref{the2}. Then
\begin{equation}
\sup_{t\in\mathbb{R}}\|(e^{-{\rm i}tH}\Omega_v^--U_v(t))\Phi_v\|=O(|v|^{-(1+\epsilon)/2})
\end{equation}
holds as $|v|\rightarrow\infty$.
\end{Lem}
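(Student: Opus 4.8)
The plan is to follow the Cook-type argument that underlies all Enss--Weder reconstruction schemes: the operator $e^{-{\rm i}tH}\Omega_v^-$ is compared to the free-like evolution $U_v(t)$ by integrating the derivative of $e^{{\rm i}sH}U_v(s)$ from $t$ to $-\infty$. Concretely, for $\Phi_v$ as in Theorem \ref{the2} one writes
\begin{equation}
(e^{-{\rm i}tH}\Omega_v^--U_v(t))\Phi_v=-{\rm i}\int_{-\infty}^t e^{-{\rm i}tH}e^{{\rm i}sH}\left(V^{\rm sing}(x)+V^{\rm reg}(x)-V^{\rm reg}(\sinh(2s)v)\right)U_v(s)\Phi_v\,{\rm d}s,
\end{equation}
using that ${\rm i}\partial_s\bigl(e^{{\rm i}sH}U_v(s)\bigr)=-e^{{\rm i}sH}\bigl(V^{\rm sing}+V^{\rm reg}(x)-V^{\rm reg}(\sinh(2s)v)\bigr)U_v(s)$ because the scalar modifier commutes through and cancels the $H_0$ term only up to the potential. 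First I would justify that this Cook integral representation is legitimate: $\Omega_v^-$ exists as a strong limit (this follows from Lemma \ref{lem4} together with the modifier being a scalar, hence trivially bounded), and the integrand is strongly continuous and integrable by the estimates already in hand.

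The second step is to bound the $L^2$-norm of the integrand uniformly in $t$. Since $e^{-{\rm i}tH}e^{{\rm i}sH}$ is unitary and the scalar modifier $e^{-{\rm i}\int_0^sV^{\rm reg}(\sinh(2\tau)v)\,{\rm d}\tau}$ has modulus one, one has
\begin{equation}
\left\|(e^{-{\rm i}tH}\Omega_v^--U_v(t))\Phi_v\right\|\leqslant\int_{-\infty}^\infty\left\|V^{\rm sing}(x)e^{-{\rm i}sH_0}\Phi_v\right\|{\rm d}s+\int_{-\infty}^\infty\left\|\bigl(V^{\rm reg}(x)-V^{\rm reg}(\sinh(2s)v)\bigr)e^{-{\rm i}sH_0}\Phi_v\right\|{\rm d}s.
\end{equation}
Now Lemma \ref{lem5} controls the first integral by $O(|v|^{-1})$ and Lemma \ref{lem7} controls the second by $O(|v|^{-(1+\epsilon)/2})$. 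Since $\epsilon>0$, the slower of the two decay rates is $O(|v|^{-(1+\epsilon)/2})$ (as $(1+\epsilon)/2<1$ when $\epsilon<1$; if $\epsilon\geqslant1$ the bound $O(|v|^{-1})$ is even used, but stating it as $O(|v|^{-(1+\epsilon)/2})$ is safe in all cases up to replacing $\epsilon$ by $\min\{\epsilon,1\}$). Taking the supremum over $t\in\mathbb{R}$ on the left-hand side then yields exactly the claimed bound.

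The one genuine subtlety — the step I expect to be the main obstacle to write cleanly — is the differential identity for $e^{{\rm i}sH}U_v(s)$ on a suitable domain and the interchange of $\partial_s$ with the integral and norm. One must check that $\Phi_v$ lies in the domain of $H$ (or approximate it by domain vectors) so that the Duhamel computation is rigorous; here Lemma \ref{lem3} guarantees $H=H_0+V^{\rm sing}+V^{\rm reg}$ is self-adjoint on $\mathscr{D}(H_0)$, and $\Phi_v\in\mathscr{S}(\mathbb{R}^n)\subset\mathscr{D}(H_0)$, so this is fine, but the bookkeeping with the time-dependent scalar modifier has to be done carefully so that the $-x^2$ part of $H_0$ is not differentiated incorrectly. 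Once that identity is in place, the estimate is a direct quotation of Lemmas \ref{lem5} and \ref{lem7}, and the proof concludes by noting the right-hand side is independent of $t$.

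\begin{proof}[Proof of Lemma \ref{lem8}]
This follows as in \cite[Proposition 2.5]{Is1}. Since $\Phi_v\in\mathscr{S}(\mathbb{R}^n)\subset\mathscr{D}(H_0)=\mathscr{D}(H)$ by Lemma \ref{lem3}, and the scalar function $s\mapsto e^{-{\rm i}\int_0^sV^{\rm reg}(\sinh(2\tau)v){\rm d}\tau}$ commutes with every operator, the Duhamel formula gives
\begin{equation}
(e^{-{\rm i}tH}\Omega_v^--U_v(t))\Phi_v=-{\rm i}\int_{-\infty}^te^{{\rm i}(s-t)H}\bigl\{V^{\rm sing}(x)+V^{\rm reg}(x)-V^{\rm reg}(\sinh(2s)v)\bigr\}U_v(s)\Phi_v\,{\rm d}s,
\end{equation}
the integral being absolutely convergent by Lemmas \ref{lem5} and \ref{lem7}. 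Since $e^{{\rm i}(s-t)H}$ is unitary and the modifier in $U_v(s)$ has modulus one, we obtain
\begin{gather}
\|(e^{-{\rm i}tH}\Omega_v^--U_v(t))\Phi_v\|\leqslant\int_{-\infty}^\infty\|V^{\rm sing}(x)e^{-{\rm i}sH_0}\Phi_v\|\,{\rm d}s\nonumber\\
+\int_{-\infty}^\infty\|\{V^{\rm reg}(x)-V^{\rm reg}(\sinh(2s)v)\}e^{-{\rm i}sH_0}\Phi_v\|\,{\rm d}s.
\end{gather}
The right-hand side is independent of $t$ and, by Lemmas \ref{lem5} and \ref{lem7}, is $O(|v|^{-1})+O(|v|^{-(1+\epsilon)/2})=O(|v|^{-(1+\epsilon)/2})$ as $|v|\rightarrow\infty$ (replacing $\epsilon$ by $\min\{\epsilon,1\}$ if necessary). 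Taking the supremum over $t\in\mathbb{R}$ completes the proof.
\end{proof}
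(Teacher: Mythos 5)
Your proof is correct and follows exactly the route the paper intends: the paper itself gives no details, deferring to \cite[Proposition 2.5]{Is1}, which is precisely this Cook--Duhamel argument combined with Lemmas \ref{lem5} and \ref{lem7} (and the same tacit convention that $\epsilon$ may be replaced by $\min\{\epsilon,1\}$). Your extra care about the domain of $H$ and the scalar modifier is sound and does not change the argument.
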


\begin{proof}[Proof of Theorem \ref{the2}]
We define
\begin{equation}
V_v(t,x)=V^{\rm sing}+V^{\rm reg}-V^{\rm reg}(\sinh(2t)v)
\end{equation}
for simplicity. By the same computation as for the proof of \cite[Theorem 2.1]{Is1}, we have
\begin{equation}
|v|({\rm i})[S,p_j]\Phi_v,\Psi_v)=e^{-{\rm i}\int_{-\infty}^\infty V^{\rm reg}(\sinh(2t)v){\rm d}t}\{I(v)+R(v)\}
\end{equation}
where
\begin{align}
I(v)&=|v|\int_{-\infty}^\infty\{(V_v(t,x)U_v(t)(p_j\Phi_0)_v,U_v(t)\Psi_v)\nonumber\\
&\qquad-(V_v(t,x)U_v(t)\Phi_v,U_v(t)(p_j\Psi_0)_v)\}{\rm d}t,\\
R(v)&=|v|\int_{-\infty}^\infty\{((e^{-{\rm i}tH}\Omega_v^--U_v(t))(p_j\Phi_0)_v,V_v(t,x)U_v(t)\Psi_v)\nonumber\\
&\qquad-((e^{-{\rm i}tH}\Omega_v^--U_v(t))\Phi_v,V_v(t,x)U_v(t)(p_j\Psi_0)_v)\}{\rm d}t.
\end{align}
We have
\begin{gather}
\int_{-\infty}^\infty|V^{\rm reg}(\sinh(2t)v)|{\rm d}t\lesssim\int_{-\infty}^\infty\langle \sinh(2t)v\rangle^{-1-\epsilon}{\rm d}t\nonumber\\
=|v|^{-1}\int_{-\infty}^\infty\langle\tau\rangle^{-1-\epsilon}\langle \tau/|v|\rangle^{-1}{\rm d}\tau/2=O(|v|^{-1})
\end{gather}
by assumption \eqref{reg} and changing $\tau=\sinh(2t)|v|$, and have
\begin{equation}
e^{-{\rm i}\int_{-\infty}^\infty V^{\rm reg}(\sinh(2t)v){\rm d}t}\rightarrow1
\end{equation}
as $|v|\rightarrow\infty$. Lemmas \ref{lem5}, \ref{lem7}, and \ref{lem8} imply
\begin{equation}
R(v)\rightarrow0
\end{equation}
as $|v|\rightarrow\infty$. We next consider the term $I(v)$. As in the proof of \cite[Theorem 2.1]{Is1}, we have
\begin{gather}
I(v)=\int_{-\infty}^\infty\langle t/|v|\rangle^{-1}\{(V^{\rm sing}(x+\hat{v}t)e^{-{\rm i}\arcsinh(t/|v|)H_0/2}p_j\Phi_0,e^{-{\rm i}\arcsinh(t/|v|)H_0/2}\Psi_0)\nonumber\\
-(V^{\rm sing}(x+\hat{v}t)e^{-{\rm i}\arcsinh(t/|v|)H_0/2}\Phi_0,e^{-{\rm i}\arcsinh(t/|v|)H_0/2}p_j\Psi_0)\}{\rm d}t/2\nonumber\\
+|v|\int_{-\infty}^\infty({\rm i}(\partial_{x_j}V^{\rm reg})(x+\hat{v}t)e^{-{\rm i}\arcsinh(t/|v|)H_0/2}\Phi_0,e^{-{\rm i}\arcsinh(t/|v|)H_0/2}\Psi_0){\rm d}t/2
\end{gather}
and
\begin{gather}
\int_{|t|<|v|^\sigma}\langle t/|v|\rangle^{-1}(V^{\rm sing}(x+\hat{v}t)e^{-{\rm i}\arcsinh(t/|v|)H_0/2}\Phi_0,e^{-{\rm i}\arcsinh(t/|v|)H_0/2}\Psi_0){\rm d}t\nonumber\\
\rightarrow\int_{-\infty}^\infty(V^{\rm sing}(x+\hat{v}t)\Phi_0,\Psi_0){\rm d}t\nonumber\\
|v|\int_{-\infty}^\infty({\rm i}(\partial_{x_j}V^{\rm reg})(x+\hat{v}t)e^{-{\rm i}\arcsinh(t/|v|)H_0/2}\Phi_0,e^{-{\rm i}\arcsinh(t/|v|)H_0/2}\Psi_0){\rm d}t\nonumber\\
\rightarrow\int_{-\infty}^\infty({\rm i}(\partial_{x_j}V^{\rm reg})(x+\hat{v}t)\Phi_0,\Psi_0){\rm d}t
\end{gather}
as $|v|\rightarrow\infty$ for $\sigma=1/(2+\epsilon)$. To complete the proof, it suffices to prove that
\begin{equation}
\int_{|t|\geqslant|v|^\sigma}\langle t/|v|\rangle^{-1}(V^{\rm sing}(x+\hat{v}t)e^{-{\rm i}\arcsinh(t/|v|)H_0/2}\Phi_0,e^{-{\rm i}\arcsinh(t/|v|)H_0/2}\Psi_0){\rm d}t\rightarrow0\label{the3_1}
\end{equation}
as $|v|\rightarrow\infty$. Using the Mehler formula and $\tanh(\arcsinh t)=t\langle t\rangle^{-1}$, we have
\begin{gather}
\langle t/|v|\rangle^{-1}|(V^{\rm sing}(x+\hat{v}t)e^{-{\rm i}\arcsinh(t/|v|)H_0/2}\Phi_0,e^{-{\rm i}\arcsinh(t/|v|)H_0/2}\Psi_0)|\nonumber\\
\leqslant\langle t/|v|\rangle^{-1}\|V^{\rm sing}((t/|v|)(x+v))\mathscr{F}\mathscr{M}((t/|v|)\langle t/|v|\rangle^{-1}/2)\Phi_0\|\|\Psi_0\|\nonumber\\
\leqslant(R_1+R_2)\|\Psi_0\|
\end{gather}
where
\begin{align}
R_1&=\|V^{\rm sing}((t/|v|)(x+v))\langle(|v|/t)p\rangle^{-2}F(|x|<|v|^{1+\sigma}/(2|t|))\|_\mathscr{B}\nonumber\\
&\quad\times\|\mathscr{F}\mathscr{M}((t/|v|)\langle t/|v|\rangle^{-1}/2)\langle(|v|/t)x\rangle^2\Phi_0\|,\\
R_2&=(|v|/|t|)\|V^{\rm sing}((t/|v|)(x+v))\|\nonumber\\
&\times\|\langle(|v|/t)p\rangle^{-2}F(|x|\geqslant|v|^{1+\sigma}/(2|t|))\mathscr{F}\mathscr{M}((t/|v|)\langle t/|v|\rangle^{-1}/2)\langle(|v|/t)x\rangle^2\Phi_0\|_\infty.
\end{align}
When $|x|<|v|^{1+\sigma}/(2|t|)$ and $|t|\geqslant|v|^\sigma$, clearly $|tx|/|v|<|v|^\sigma/2\leqslant|t|/2$ and $|(t/|v|)(x+v)|>|t|/2$ hold. We therefore have
\begin{gather}
\|V^{\rm sing}((t/|v|)(x+v))\langle(|v|/t)p\rangle^{-2}F(|x|<|v|^{1+\sigma}/(2|t|))\|_\mathscr{B}\nonumber\\
\leqslant\|V^{\rm sing}((t/|v|)(x+v))\langle(|v|/t)p\rangle^{-2}F(|(t/|v|)(x+v)|>|t|/2)\|_\mathscr{B}\nonumber\\
=\|V^{\rm sing}(x)\langle p\rangle^{-2}F(|x|>|t|/2)\|_\mathscr{B}\lesssim t^{-4}\|V^{\rm sing}(x)\langle p\rangle^{-2}(\Delta^2\chi)(2x/\tau)\|_\mathscr{B}
\end{gather}
computing the commutator once more in the same way as \eqref{lem5_1}, and
\begin{equation}
\int_{|t|\geqslant|v|^\sigma}R_1{\rm d}t\lesssim|v|^{2-2\sigma}\|\langle x\rangle^2\Phi_0\|\int_{|v|^\sigma}^\infty t^{-4}{\rm d}t=O(|v|^{-1-2\sigma})
\end{equation}
is obtained noting $|v|/|t|\leqslant|v|^{1-\sigma}$ for $|t|\geqslant|v|^\sigma$. We now finally consider $R_2$. Applying Lemma \ref{lem6} for $m(x)=\langle x\rangle^{-2}$ and $\lambda=|v|/t$, we have
\begin{equation}
\|\langle(|v|/t)p\rangle^{-2}\|_{\mathscr{B}(L^\infty)}\lesssim1.
\end{equation}
Integrating by parts, as in \eqref{lem5_3}, we have
\begin{gather}
\mathscr{F}\mathscr{M}((t/|v|)\langle t/|v|\rangle^{-1}/2)\langle(|v|/t)x\rangle^2\Phi_0\nonumber\\
=\langle x\rangle^{-2}\mathscr{F}\mathscr{M}((t/|v|)\langle t/|v|\rangle^{-1}/2)\langle(|v|/t)x\rangle^2\Phi_0\nonumber\\
+(|v|/t)\langle t/|v|\rangle\langle x\rangle^{-2}x\cdot\mathscr{F}x\mathscr{M}((t/|v|)\langle t/|v|\rangle^{-1}/2)\langle(|v|/t)x\rangle^2\Phi_0\nonumber\\
-{\rm i}\langle x\rangle^{-2}x\cdot\mathscr{F}\mathscr{M}((t/|v|)\langle t/|v|\rangle^{-1}/2)\nabla_x\langle(|v|/t)x\rangle^2\Phi_0.
\end{gather}
If $|x|\geqslant|v|^{1+\sigma}/(2|t|)$ and $|t|\geqslant|v|^\sigma$, clearly $|x|\geqslant|v|/2$ holds. By the Hausdorff--Young inequality and $|v|/|t|\leqslant|v|^{1-\sigma}$ for $|t|\geqslant|v|^\sigma$, we have
\begin{gather}
\|F(|x|\geqslant|v|^{1+\sigma}/(2|t|))\langle x\rangle^{-2}\mathscr{F}\mathscr{M}((t/|v|)\langle t/|v|\rangle^{-1}/2)\langle(|v|/t)x\rangle^2\Phi_0\|_\infty\nonumber\\
\lesssim|v|^{-(2+N)\sigma}\|\langle x\rangle^2\Phi_0\|_1
\end{gather}
performing the integration by parts $N$-times. The second terms is estimated and yields
\begin{gather}
\|F(|x|\geqslant|v|^{1+\sigma}/(2|t|))\langle v/t\rangle\langle x\rangle^{-2}x\cdot\mathscr{F}x\mathscr{M}((t/|v|)\langle t/|v|\rangle^{-1}/2)\langle(|v|/t)x\rangle^2\Phi_0\|_\infty\nonumber\\
\lesssim|v|^{2-(2+N)\sigma}\|\langle x\rangle^2\Phi_0\|_1,
\end{gather}
where we used $(|v|/|t|)\langle t/|v|\rangle=\langle v/t\rangle\lesssim|v|^{1-\sigma}$ for $|t|\geqslant|v|^\sigma$. The third term is
\begin{gather}
\|F(|x|\geqslant|v|^{1+\sigma}/(2|t|))\langle x\rangle^{-2}x\cdot\mathscr{F}\mathscr{M}((t/|v|)\langle t/|v|\rangle^{-1}/2)\nabla_x\langle(|v|/t)x\rangle^2\Phi_0\|_\infty\nonumber\\
\lesssim|v|^{1-(2+N)\sigma}\|\langle x\rangle^2\Phi_0\|_1.
\end{gather}
Combining these estimates and
\begin{equation}
\|V^{\rm sing}((t/|v|)(x+v))\|=(|t|/|v|)^{-n/2}\|V^{\rm sing}\|,
\end{equation}
we have
\begin{equation}
\int_{|t|\geqslant|v|^\sigma}R_2{\rm d}t\lesssim|v|^{3+n/2-(2+N)\sigma}\int_{|v|^\sigma}^\infty t^{-1-n/2}{\rm d}t=O(|v|^{3+n/2-(2+N+n/2)\sigma}).
\end{equation}
We can choose $N\in\mathbb{N}$ such that $3+n/2-(2+N+n/2)\sigma<0$ and hence have \eqref{the3_1} as $|v|\rightarrow\infty$.
\end{proof}
\bigskip
\noindent\textbf{Acknowledgments.} This work was supported by JSPS KAKENHI Grant Numbers JP20K03625 and JP21K03279.\\\\
\noindent\textbf{Data Availability.} The data that support the findings of this study are available from the corresponding author upon reasonable request. 


\end{document}